\author[1]{Matheus V. X. Ferreira\thanks{Contact: mvxf@cs.princeton.edu}\thanks{Part of this work was completed while visiting Harvard University.}}
\author[2]{Daniel J. Moroz}
\author[2]{David C. Parkes}
\author[3]{Mitchell Stern\protect\footnotemark[3]}
\affil[1]{Computer Science, Princeton University}
\affil[2]{Computer Science, Harvard University}
\affil[3]{EECS, University of California, Berkeley}
\title{Dynamic Posted-Price Mechanisms for the Blockchain Transaction Fee Market\thanks{The authors thank Cemil Dibek, Vikram V. Ramaswamy, and Tim Roughgarden for helpful discussions.}}
\begin{document}

\begin{titlepage}
\maketitle

\begin{abstract}
In recent years, prominent blockchain systems such as Bitcoin and Ethereum have experienced explosive growth in transaction volume, leading to frequent surges in demand for limited block space and causing transaction fees to fluctuate by orders of magnitude. Existing systems sell space using first-price auctions~\cite{nakamoto2008bitcoin}; however, users find it difficult to estimate how much they need to bid in order to get their transactions accepted onto the chain. If they bid too low, their transactions can have long confirmation times. If they bid too high, they pay larger fees than necessary.

In light of these issues, new transaction fee mechanisms have been proposed, most notably EIP-1559~\cite{buterin2019eip1559}, aiming to provide better usability. EIP-1559 is a history-dependent mechanism that relies on block utilization to adjust a base fee. We propose an alternative design---a {\em dynamic posted-price mechanism}---which uses not only block utilization but also observable bids from past blocks to compute a posted price for subsequent blocks. We show its potential to reduce price volatility by providing examples for which the prices of EIP-1559 are unstable while the prices of the proposed mechanism are stable. More generally, whenever the demand for the blockchain stabilizes, we ask if our mechanism is able to converge to a stable state. Our main result provides sufficient conditions in a probabilistic setting for which the proposed mechanism is approximately welfare optimal and the prices are stable. Our main technical contribution towards establishing stability is an iterative algorithm that, given oracle access to a Lipschitz continuous and strictly concave function $f$, converges to a fixed point of $f$.
\end{abstract}

\end{titlepage}

\tableofcontents

\section{Introduction}

Due to the explosive interest in blockchains such as Ethereum and Bitcoin, allocating scarce block space has become a significant problem. The high competition for block space has made transaction fee auctions a multibillion-dollar venture for miners~\cite{graffeo}. The status quo, first adopted by Bitcoin~\cite{nakamoto2008bitcoin}, is to auction block space via a first-price auction where miners select transactions with the highest fees to be included in a block.

Since the demand for block space varies with time, users need to continually change their bids to maximize their chances of getting a transaction included and, at the same time, minimize the amount they pay. This challenge has resulted in the proposal of new  mechanisms~\cite{lavi2019redesigning, basu2019towards, buterin2019eip1559} that pay special attention to the benefits of mechanisms that are {\em incentive compatible} (or strategy-proof, promoting truthful bids). A particular challenge for designing transaction fee auctions is that the miners are also auctioneers. While the blockchain can commit to implementing a payment rule, miners have complete control over the transactions included in a block (the {\em winner determination}). Moreover, there is information asymmetry, where the blockchain only sees the transactions that are included in blocks and not, for example, the excess demand.

Designing strategy-proof auctions that are at the same time robust to strategic auctioneers is a challenging task and closely related to the problem of designing credible strategy-proof auctions~\cite{akbarpour2020credible, ferreira2020credible}. In a {\em credible auction}, the auctioneer is allowed to implement any ``safe deviation"---a deviation that cannot be detected by bidders---from the promised auction rule. Yet, in expectation, the auctioneer should achieve no more revenue than from implementing the promised auction. \citet{akbarpour2020credible} show that a family of ascending price auctions are the unique revenue-optimal, credible, and strategy-proof auctions for a computationally unbounded auctioneer. \citet{ferreira2020credible} propose the {\em Deferred Revelation Auction} (DRA), which is credible for an  auctioneer that is computationally bounded.\footnote{\citet{ferreira2020credible} also showed that DRA is credible only under standard auction theoretical assumptions on the distribution of bidder valuations. \citet{essaidi21credible} proposes the Ascending Deferred Revelation Auction (ADRA) which improves DRA by removing the distributional assumption. As a trade-off, ADRA requires a constant number of rounds (on expectation) to terminate rather than two rounds and bidders must deposit dynamic collateral---the collateral increases the longer a bidder participates in the mechanism---while DRA requires constant collateral.} However, it is not known how to implement the ascending-price auction or DRA on blockchains efficiently. Ideally, only information about winning transactions would be stored on the blockchain, whereas a naive implementation of the ascending-price auction or DRA would require information on all transactions. The challenge is that blockchain space is a limited and expensive resource. Each communication on the blockchain is stored in blocks that propagate to the whole network.

Another possible direction would be that of static posted-price mechanisms~\cite{chawla2010multi}, which are strategy-proof and credible. However, to extract good revenue and welfare, the posted price must be calibrated according to the distribution of user valuations. This is a challenge when user demand is highly volatile, making static posted-price mechanisms ill-suited to transaction fee markets.

In this paper, we study the design of {\em dynamic posted-price mechanisms}, where the price offered can adapt based only on information about winning bids from previous auctions. It is essential that the price adaptation only uses the value of winning bids and not all bids due to information asymmetry and the need for information efficiency. We seek to understand whether this ability to use adaptive prices enables approximately welfare-optimal credible mechanisms that provide a good user experience (incentive alignment).

Towards that direction, the recent {\em  EIP-1559 proposal} of \citet{buterin2019eip1559} changed the status quo, first-price auction on Ethereum to a hybrid between a first-price auction and a dynamic posted-price mechanism. In EIP-1559, each block has a fixed maximum capacity $m$ and a posted price $q>0$ that is a deterministic function of the previous block's utilization (described formally in Appendix \ref{sec:summary-mechanisms}). Every submitted transaction includes a {\em bid}, $b>0$, as well as a {\em bid cap}, $c>0$, which is the amount a user is willing to spend to have their transaction included. If a transaction's bid cap $c$ is greater than $b + q$, a miner may choose to include it in the block. A target utilization is set to $m/2$. The protocol increases the price in the next auction if block utilization is above the target and decreases the price if utilization is below the target. A winner pays $b + q$ to the blockchain, while miners receive only $b$ as revenue ($q$ is said to be ``burned" since no one receives it). Introducing a price floor via dynamic posted prices is likely to make payments more predictable, and this aspect of the design received good reception from users in a community outreach report~\cite{eip1559outreach}. Moreover, awarding only $b$ to miners makes EIP-1559 robust to collusion between miners and bidders~\cite{roughgarden2020transaction}.

When the posted price $q$ is above the market clearing price, EIP-1559's mechanism reduces to a dynamic posted-price mechanism similar to the mechanism we study: users will bid $b = 0$ and report a bid cap $c$ that equals their valuation for getting a transaction included in the next block.\footnote{Here, we assume there is no reserve price. EIP-1559 also supports a hard-coded reserve price in its first-price auction.}  A challenge, though, is that EIP-1559 still requires bidders to compete in a first-price auction when the posted price $q$ is below the market clearing price---the price where the demand meets the supply for block space.

\citet{roughgarden2020transaction} formalizes EIP-1559 as a decentralized auction game and shows that EIP-1559 is: incentive compatible for myopic bidders whenever $q$ is above the market clearing price; incentive compatible for myopic miners; and resistant to collusion between miners with bidders.\footnote{Myopic bidders and miners maximize their immediate utility from the next block.} \citet{roughgarden2020transaction} also proposes a {\em tipless mechanism} that modifies EIP-1559 so that the bid $b$ is hard-coded into the protocol. This change sacrifices collusion resistance in favor of the mechanism becoming incentive compatible for myopic bidders even when $q$ is below the market clearing price. Our work is complementary. We focus on designing dynamic posted-price mechanisms that provably converge to a price equilibrium. We propose alternative price update rules that improve upon EIP-1559 and converge to an equilibrium in instances where EIP-1559 is unstable. Our main observation is to design price update rules that use observable bids from prior blocks rather than rely simply on block utilization.  It is natural to ask if our mechanisms converge to a {\em good} price that approximates the market-clearing price. If the equilibrium price is too high, few bidders can afford to pay for block access, resulting in low welfare. Thus we also show that our mechanisms are approximately welfare optimal.

The main difference between the mechanisms studied here and the EIP-1559 proposal is how they handle the case where the posted price $q$ is below the market-clearing price (i.e., when the number of bidders willing to pay $q$  exceeds the supply for block space). While EIP-1559 forces users to compete in a first-price auction, we retain the simple dynamic posted-price incentives. Crucially, we need to ensure that even in the case of excess demand, the mechanism remains robust to manipulation by blockchain auctions, i.e., miners will choose to randomize the allocation to users who bid above the posted price.

Our analysis assumes that a miner is myopic about the decisions made regarding which transactions to include in a block whenever they are chosen to create a new block. This is a reasonable assumption for proof-of-work systems in which a new miner is, in effect, chosen at random in each time step to create a block. This creates uncertainty as to when a miner will next be ``chosen" and makes it reasonable that miners are myopic and seek to maximize immediate utility when given the opportunity.\footnote{This assumption is somewhat stronger in proof-of-stake (PoS) blockchains where miners only commit to a block at the moment that the block is revealed (instead of committing before the block is created). This significantly increases the ways in which miners can deviate from the intended protocol~\cite{ferreira2021proof}. However, miner myopia is still a reasonable assumption in PoS systems if we assume miners do not withhold blocks.}
\subsection{Brief technical overview}
We formalize the \emph{dynamic posted-price} (DPP) mechanism (Definition~\ref{def:dynamic-posted-price}) as a posted-price mechanism with a {\em  price update rule} and an {\em initial price} $q_0$. In each time step $t$, the mechanism announces a {\em price} $q_t$ and the miner whose turn it is can allocate block space to at most $m$ users from those with bids $b_i \geq q_t$ (henceforth, we refer to users as {\em bidders}). Given the winning bids at time step $t$, the update rule computes the subsequent price $q_{t+1}$. We seek a {\em good} posted price $z$,  which is a price that is above the market clearing price (the number of bidders willing to pay $z$ is at most $m$), but not too large so that the resulting welfare is approximately optimal (it can approximate the welfare of selling to the highest $m$ bidders).

\vspace{1mm}\noindent\textbf{How to pick the block size?} It is not immediately obvious why we should settle on auctioning blocks of constant size $m$. While allowing for larger blocks permits the blockchain to process more transactions per second, it also results in higher network propagation delay for new blocks. In the Ethereum blockchain, miners converge to an ideal block size via voting: each block is allowed to vote to increase $m$, decrease $m$, or keep $m$ unchanged for the subsequent block~\cite{o_neal_2020}. As a simplifying  assumption, we assume stakeholders agree on a block size $m$ that is time invariant.

\vspace{1mm}\noindent\textbf{Why use a price update rule?} Clearly, no single posted price can be good (in the sense defined above) if user demand changes over time. Given this, we set the goal as that of converging to a good posted price when demand is not changing. That is, the blockchain goes through epochs where the demand is stable and epochs where the demand is changing (e.g., during the launch of a new DeFi project). Although a posted price might not be good when demand is changing, once the demand stabilizes, we want the price to converge to a good price.

A dynamic posted-price mechanism is {\em asymptotically stable with respect to the distribution $F$ over bidder valuations and miners' randomness} if, starting from any initial price, the sequence of expected posted prices converges to an {\em equilibrium price} $z$; that is, if $z$ is the current posted price, then the expected value of next posted price is also $z$. An iterative algorithm that uses price update rules (like ours) allows us to naturally converge to a new equilibrium price whenever the distribution over bidder valuations changes. A mechanism that converges to a posted price $z$ is {\em approximately welfare optimal with respect to distribution $F$ over bidder valuations} if selling at this limit price $z$ gives a constant fraction of the optimal welfare (attained when selling to the top $m$ bidders).

A challenge is how to achieve incentive compatibility for bidders together with robustness against manipulation by miners, all at the same time as allowing for adaptive prices. One concern, for example, is how to determine the winning bids when there is excess demand, i.e., when there are more than $m$ bids with $b_i \geq q_t$. A naive approach 
would suggest that miners  use a {\em maximum value} (MV) allocation rule (Definition~\ref{def:mv}) and serve the highest bidders. However, this is not incentive-compatible for bidders. A bidder with value $v_i \geq q_t$ would have an incentive to bid $b_i = \infty$ to maximize their chances of receiving block space. To fix this issue, we propose the {\em random maximal} (RM) allocation rule (Definition~\ref{def:rm}): whenever more than $m$ bidders bid $b_i \geq q_t$, the miner must serve a uniformly random set with $m$ bidders among those with $b_i \geq q_t$.\footnote{This protocol requires only private randomization and not trusted, public randomness, which is an expensive resource on blockchains~\cite{ferreira2021proof}.} This removes incentives for bidders to overbid, since any bid above $q_t$ has an equal chance of being served. 

However, a remaining challenge is to design a price update rule that provides asymptotic stability while motivating a miner to follow the RM allocation. For example, the {\em Welfare-Based Dynamic Posted-Price} (WDPP) mechanism (Section~\ref{sec:wdpp}) converges exponentially quickly to a price equilibrium when the revenue curve, $\rev(q) = q \cdot \mde(q)$ (where $q$ is the price and $\mde(q)$ is the expected number of slots sold  at price $q$), is Lipschitz continuous. However, under the WDPP mechanism, miners strictly prefer to implement the MV allocation rule over the RM allocation rule (even if instructed otherwise). This drives the WDPP mechanism to not be incentive compatible for bidders---from the discussion above, bidders can benefit from overbidding.

We also study the \emph{Utilization-Based Dynamic Posted-Price} (UDPP) mechanism in Section~\ref{sec:udpp}, similar to the tipless mechanism~\cite{roughgarden2020transaction} and based on EIP-1559's price update rule. Still, UDPP differs from EIP-1559 in that it is purely a posted-price mechanism and not a hybrid between a first-price and posted-price mechanism (since it only requests bidders to report their budget). Moreover, it differs from the tipless mechanism by adopting a randomized allocation. Although UDPP is incentive compatible for bidders and protects the blockchain from deviations from revenue-maximizing miners, we show that posted prices from the UDPP mechanism can be unstable even when demand is smaller than supply (Proposition~\ref{prop:twdpp-stable}).

To combine the desirable properties of the WDPP and UDPP mechanism (and avoid their limitations), our main result is the design and analysis of the {\em Truncated Welfare-Based Dynamic Posted-Price} (TWDPP) mechanism, Section~\ref{sec:twdpp}, which is incentive compatible for myopic bidders, protects the blockchain from deviation from revenue-maximizing miners (Theorem~\ref{thm:twdpp-ic-dsic}), and, under natural assumptions, converges to an equilibrium price (Theorem~\ref{thm:twdpp-stability}) in which the mechanism provably obtains a worst-case $1/4$ of the optimal welfare (Theorem~\ref{thm:twdpp-welfare}). To prove stability, our main technical contribution is an iterative algorithm that, given oracle access to a Lipschitz continuous and strictly concave function $f$, converges to a fixed point of $f$ (Lemma~\ref{lemma:reduction}). Throughout the paper, we give intuitions for the theoretical results while deferring most proofs to the appendix.

In Section~\ref{sec:experimental-results}, we simulate the WDPP, UDPP, and TWDPP mechanisms. We find that the TWDPP mechanism achieves the strongest welfare of the group and obtains at least $4/5$ of the optimal welfare at the equilibrium price (outperforming the worst-case $1/4$ bound). 

\subsection{Related work}
We have surveyed most of the relevant work above, with~\citet{roughgarden2020transaction} being the closest to ours. \citet{leonardos2021dynamical} provides complementary stability conditions for the UDPP mechanism in the non-atomic model where the number of bidders is $n = m$ and $m \to \infty$. Our results do not make assumptions about $n$ and $m$, but require Lipschitz and concavity conditions on a quantity derived from the distribution of bidder valuations for any $n$ and $m$ (Theorem~\ref{thm:udpp-stability}) whereas their results do not.

We provide a detailed discussion of alternative transaction fee mechanism proposals in Appendix~\ref{sec:summary-mechanisms}. \citet{lavi2019redesigning} proposed the monopolistic price auction in a setting with unlimited block space where all transactions in a block pay the lowest transaction bid. \citet{yao2018incentive} answers several of the conjectures they introduce and shows that the monopolistic price auction is approximately incentive compatible under high demand ($n \to \infty$).

Google's adoption of the {\em generalized second-price auction} (GSP)~\cite{edelman2007internet} was a similar response to bid cycles that had plagued  first-price auctions~\cite{edelman2007strategic}. Unlike the blockchain setting, Google had the commitment power to implement a second-price auction. \citet{basu2019towards} proposed a modified GSP auction as a transaction fee mechanism, and show that it is incentive compatible for bidders but like \cite{lavi2019redesigning} requires the demand $n \to \infty$.

{\em Dynamic mechanism design}  has studied the design of mechanisms in  a setting where demand, preferences, and supply may change over time~\cite{parkes2004mdp, lavi2004competitive, hajiaghayi2005online, parkes2007online}. To the best of our knowledge, this earlier literature has not considered the problems of credibility and information asymmetry that are a defining feature concerning the design of transaction fee mechanisms.
Another body of  related work is that of {\em distributed algorithmic mechanism design} (DAMD), 
which was introduced   in the context of the Border Gateway Protocols (BGP) for Internet routing by
\citet{feigenbaum2004distributed}; this was
later extended to formalize the concept of the ``faithfulness" of distributed mechanisms, seeking to bring the intended rules into an {\em ex post} Nash equilibrium~\cite{shneidman2004specification,parkes2004distributed}.
In contrast to transaction fee mechanism design, this earlier work made critical use of redundant communication paths so that no one participant had a monopoly on information~\cite{monderer1999distributed}.

\section{Model}

We study a setting where there are $m$ identical {\em block slots} to allocate at each discrete time step. A set of bidders arrives in each step, and one miner is chosen randomly to become {\em active}. We assume miners are myopic and seek to maximize their immediate utility.

As we will explain, the miners can determine the allocation and can choose to deviate from the intended rules of a mechanism, but the miners cannot modify the payments conditioned on the allocation. These payments will be those of the dynamic, posted-price mechanism, and moreover, the prices will be updated across auctions according to the intended price update rule.

\subsection{Preliminaries}
The set of {\em active bidders} $M_t \subset \mathbb N_+$ denotes the bidders who arrive at time $t$. Bidder $i \in \cup_{t = 1}^\infty M_t$ has a {\em private value} $v_i \in \mathbb R_+$. We consider a stylized setting where bidders are myopic---bidders are only interested in receiving a slot at time $t$ and they do not participate at time $> t$.\footnote{\citet{huberman2017monopoly} considers the status quo first-price auction in the more general setting where bidders are interested in reducing their confirmation time rather than being included in the subsequent block.}

For each $i\in M_t$, let $b_i\in \mathbb{R_+}$ denote bidder $i$'s bid value. We assume lexicographic tie-breaking in the case of bidders with identical bid values. For a set of active bidders $M$, we define $M(q) := \{i \in M: b_i \geq q\}$. The \emph{allocation} $B_t \subseteq \mathbb N$ is a set of at most $m$ bids that receive a slot at time $t$.

We now define the allocation rule and payment rule for the family of dynamic mechanisms.
\begin{definition}[Allocation Rule]
An \emph{allocation rule} for a dynamic mechanism is a vector-valued function $\vec x$ from history $B_1, \ldots, B_{t-1}$ and active bidders $M_t$ to an indicator $x_i(B_1, \ldots, B_{t-1}, M_t) \in \{0, 1\}$ for each active bidder $i \in M_t$. The indicator is $1$ if and only if bidder $i \in M_t$ receives a slot at time $t$. An allocation rule is \emph{feasible} if $\sum_{i \in M_t} x_i(B_1, \ldots, B_{t-1}, M_t) \leq m$. An allocation rule is \emph{deterministic} if $x_i(B_1, \ldots, B_{t-1}, M_t)$ is a deterministic variable, and \emph{randomized} otherwise.
\end{definition}
\begin{definition}[Payment Rule]
A \emph{payment rule} for a dynamic mechanism is a vector-valued function $\vec p$ from history $B_1, \ldots, B_t$ to a payment $p_i(B_1, \ldots, B_t)$ for each bidder $i \in B_t$. 
\end{definition}
All bids are public to all miners, but are not available to the blockchain, i.e. the infrastructure determining payments and how prices are updated across auctions. Rather, the blockchain is only privy to information about the values of bids that are actually assigned a slot. Hence the payment rule can only depend on information disclosed in a block.

To distinguish the role of miners and the blockchain, we refer to mechanism $(\vec x, \vec p)$ as the {\em intended allocation rule} $\vec x$, to be implemented by miners, and the {\em intended payment rule} $\vec p$, to be implemented by the {\em pricing mechanism} (e.g., the blockchain). Although the pricing mechanism, which is provided by the blockchain, can faithfully commit to implement a payment rule, miners are unable to faithfully commit to implement the allocation rule. The pricing mechanism only learns the bids in $B_t$ and does not learn the bids in  $M_t \setminus B_t$. Thus, miners can try to change the allocation in their favor, including the effect this might have on future payments. As an example, a miner can impersonate a set of {\em fake bidders}, $F_t \subset \mathbb N_+$, at time $t$, and each fake bidder $i \in F_t$ can submit a \emph{fake bid} $b_i \in \mathbb R$ and be allocated a slot (see Example~\ref{example:1}). 

We assume that bidders have \emph{quasilinear utility}. That is, the utility of bidder $i \in M_t$ is
\begin{equation}\label{eq:utility}
u_i(v_i, b_i) := x_i(B_1, \ldots, B_{t-1}, M_t) \cdot (v_i - p_i(B_1, \ldots, B_t)).
\end{equation}
We  only consider mechanisms that transfer all payments received from  bidders to the active miner.\footnote{\citet{roughgarden2020transaction} shows that not awarding miners the revenue of the auction, which is what EIP-1559 does, is important to avoid collusion of bidders with miners. All of our results remain untouched if one modifies Equation~\ref{eq:miner-revenue} to transfer $\delta \in (0, 1]$ fraction of the revenue to the active miner.} The miner's utility is a function $u_0$ from allocation $B_t$ to the revenue of the miner (net any payments from fake bids):
\begin{equation}\label{eq:miner-revenue}
u_0(B_t) := \sum_{i \in B_t \setminus F_t} p_i(B_1, \ldots, B_t).
\end{equation}
\subsection{Decentralized multi-round auction game}

We now define the sequence of steps that models this transaction fee market.
\begin{definition}[Decentralized multi-round auction game]
The \emph{decentralized multi-round auction game} is a multi-round game between multiple miners and the pricing mechanism. We are given an allocation rule $\vec x$, a payment rule $\vec p$, and initially all miners are {\em inactive} and the time step is $t = 1$. The game proceeds as follows:
\begin{enumerate}[label=(\arabic*)]
\item \label{game-1} All miners observe $M_t$ and all bids $b_i \in M_t$.
\item  One miner is chosen at random to become \emph{active}.
\item \label{game-3} The active miner chooses and announces an allocation
$$B_t  := \{i \in M_t : x_i(B_1, \ldots, B_{t-1}, M_t) = 1\}.$$
\item The pricing mechanism observes $B_t$ and all bids $b_i \in B_t$.\footnote{The integrity of bid values can be enforced via digital signatures.}
\item \label{game-5} For each bidder $i \in B_t$, the pricing mechanism charges $p_i(B_1,\ldots, B_t)$.
\item \label{game-6} The active miner receives a payment $\sum_{i \in B_t} p_i(B_1, \ldots, B_t)$.
\item The active miner becomes inactive. Increment $t$ and return to step~\ref{game-1}.
\end{enumerate}
\end{definition}
Unlike standard mechanism design~\cite{myerson1981optimal}, miners are responsible for choosing $B_t$, and yet at the same time, cannot commit to implementing a particular allocation rule $\vec x$; that is, an active miner can corrupt step \ref{game-3} of the game in any way they want. Miners may even allocate slots to fake bidders, $i \in F_t$. 

Because the pricing mechanism can only observe bids in $B_t$ and cannot observe $M_t \setminus B_t$, the payments must depend  only on $B_1, \ldots, B_t$. 

Blockchains can commit to executing honest code and can faithfully implement steps \ref{game-5} and \ref{game-6}. Thus, miners can manipulate the allocation $B_t$ of the intended mechanism, but not what the pricing mechanism computes once it learns $B_t$.
\begin{example}\label{example:1}
Suppose there is a single slot for sale, and the mechanism $(\vec x, \vec p)$ is a first-price auction (the highest bidder receives the slot and pays his bid). Suppose there are two active bidders at time $t$, $M_t = \{1, 2\}$, with identical values $v_1 = v_2 = 2$ and distinct bids $b_1 = 1$, $b_2 = 2$. While the values are private to bidders, the bids become known to miners at Step~\ref{game-1}. At Step~\ref{game-3}, the active miner is instructed by the allocation rule to allocate the slot to bidder 2, but can make other choices. For example, the active miner can allocate the slot to bidder 1 and receive utility $\$1$; allocate to bidder 2 and receive utility $\$2$; allocate the slot to neither bidder and receive utility $\$0$; or even allocate the slot to a fake bidder with some bid $b$ and receive a utility of $v - v = \$0$ (i.e., they pay the mechanism at Step~\ref{game-5}, but the mechanism returns the payment at Step~\ref{game-6}). In this example, the active miner's optimal action is to allocate the slot to bidder 2 and receive utility $\$2$. Thus, in this example, it is in the best interest for active miners to implement the intended mechanism.
\end{example}

\subsection{Mechanism design}
We study the stability and the social welfare of a mechanism in the {\em probabilistic setting}. For simplicity, assume $n$ bidder values are drawn i.i.d. from a continuous {\em cumulative density function} $F$:
\begin{equation}\label{eq:cdf}
F(x) := Pr[v \leq x].
\end{equation}
Although our welfare guarantees require the i.i.d. assumption, our conditions for stability do not. Instead, our conditions for stability will require Lipschitz and concavity conditions on quantities derived from the distribution where bidder valuations are drawn. We define $\vec v := \{v_1, v_2, \ldots, v_n\} \in \mathbb R_+^n$ as an $n$-dimensional random variable, where $v_1, v_2, \ldots, v_n$ are drawn i.i.d. from $F$.
\begin{definition}[Social Welfare]\label{def:welfare}
For a mechanism $(\vec x, \vec p)$ with history $B_1, B_2, \ldots, B_{t-1}$, let $M_t = [n] = \{1, 2, \ldots, n\}$ be a set of bidders with values $\vec v = \{v_1, v_2, \ldots, v_n\}$ drawn i.i.d. from $F$. The expected \emph{social welfare} at time $t$ with respect to allocation rule $\vec x$ is
\begin{equation}\label{eq:social-welfare}
\welfare_t(\vec x) := \mathbb E_{\vec v \sim F^n}\left[\sum_{i = 1}^n v_i \cdot x_i(B_1, \ldots, B_{t-1}, M_t) \bigg| B_1, \ldots, B_{t-1}\right].
\end{equation}
We drop the subscript $t$ and write $\welfare(\vec x)$ when the time step is clear from context. A mechanism is {\em efficient} if it extracts the maximum possible social welfare. Let $\vec x^*(\vec v)$ be the welfare-maximizing feasible allocation for a particular value profile $\vec v$, with $\sum_{i = 1}^n x_i^*(\vec v) = m$. The \emph{optimal expected social welfare} is:
\begin{equation}\label{eq:opt-welfare}
\opt := \e{\sum_{i = 1}^n v_i \cdot x_i^*(\vec v)}.
\end{equation}
\end{definition}
We will consider a mechanism as having a \emph{good user experience} if it is {\em ex post Nash incentive compatible (IC)} and {\em individually rational (IR)} for bidders---bidders have the incentive to bid their valuation and pay at most their bid. 
\begin{definition}[Ex Post Nash Equilibrium]\label{def:epne}
A \emph{strategy} for bidder $i$ is a function $b_i^*$ from value $v_i$ to bid $b_i^*(v_i)$. A strategy profile $b^*(\cdot)$ is an \emph{ex post Nash equilibrium} (EPNE) for a mechanism $(\vec x, \vec p)$ if for all histories $B_1, \ldots, B_{t-1}$, for all active bidders $M_t$, for all bidders $i \in M_t$, for all values $v_i$, bidding $b_i^*(v_i)$  maximizes the utility~\eqref{eq:utility} of bidder $i$ conditioned on all $j \neq i \in M_t$ following strategy $b_j^*(v_j)$.
\end{definition}
\begin{definition}[Ex Post Incentive Compatible]\label{def:strategyproof}
A mechanism $(\vec x, \vec p)$ is \emph{ex post incentive compatible} (IC) for bidders if it has a \emph{truthful ex post} Nash equilibrium where all real bidders bid their true value. That is, $b_i^*(v_i) = v_i$ for all $i \in \cup_{t = 1}^\infty M_t$.
\end{definition}
Note dominant-strategy IC rather than {\em ex post} IC is too strong for the blockchain setting because the pool of bids is public and, in principle, bidders could choose to condition their strategies on the reports of others. Yet {\em ex post} IC is an appropriate and typical solution concept for dynamic settings~\cite{cavallo10,bergemann10}, and provides a useful strategic simplification and improvement in ease of use: it is optimal for a bidder to report truthfully, no matter its actual value and the values of others, as long as others are also truthful.
\begin{definition}[Individual Rationality (IR)]
A mechanism $(\vec x, \vec p)$ is \emph{individually rational} if for all histories $B_1, B_2, \ldots$, bidders pay at most their bids: $p_i(B_1, B_2, \ldots) \leq b_i$ for all bidders $i$.
\end{definition}
\begin{definition}[Dominant Strategy Incentive Compatible (DSIC) for Myopic Miners]\label{def:credible}
A mechanism $(\vec x, \vec p)$ is \emph{DSIC for myopic miners} if for all histories $B_1, \ldots, B_{t-1}$, for all active bidders $M_t$, the active miner at time $t$ maximizes their utility~\eqref{eq:miner-revenue} by implementing the allocation rule $\vec x$. That is, the miner sets $F = \emptyset$ and chooses $B_t = \{i \in M_t : x_i(B_1, \ldots, B_{t-1}, M_t) = 1\}$.
\end{definition}
To ensure the bidder's myopia is an adequate assumption, we will aim to design dynamic mechanisms that are stable. To be concrete: whenever the demand for blocks is at a stable state, we will require the mechanism to reach a stable state, at which point a bidder will receive the same expected utility regardless of the time they submit a bid. For simplicity, we first give a notion of stability suitable for the deterministic setting and in Section~\ref{sec:dpp}, we provide a more general definition for the probabilistic setting.
\begin{definition}[Deterministic Stability]\label{def:deterministic-stability}
Assume the same set $M$ of bidders with bid profile $\vec b$ arrives at all time steps, i.e. $M_t = M$ for all $t$. The mechanism $(\vec x, \vec p)$ is {\em stable} with respect to the deterministic demand $M$ if for all bidders $i \in M$, the allocation rule $x_i(B_1, B_2, \ldots, B_{t-1}, M_t)$ converges to $x_i^*(\vec b)$ and the payment rule $p_i(B_1, B_2, \ldots, B_t)$ converges to $p_i^*(\vec b)$ as $t \to \infty$.
\end{definition}
Note if mechanism $(\vec x, \vec p)$ is stable, then the dynamic mechanism converges to a static mechanism $(\vec x^*, \vec p^*)$. If $\vec b$ is the bid profile of all bidders, then the utility of bidder $i$ with value $v_i$ is $(v_i - p_i^*(\vec b)) \cdot x_i^*(\vec b)$ which is independent of the time they choose to participate which motivates bidder's myopia.

Below we summarize the desiderata for the design of transaction fee mechanisms for decentralized blockchains:
\begin{itemize}\itemsep0em
    \item \textbf{Communication Complexity.} The mechanism requires $O(m)$ communication between bidders and the blockchain.\footnote{The first-price, second-price, and posted-price mechanisms have communication complexity $O(m)$. As a distinction, the communication cost between miners and the blockchain on known implementations of credible, strategyproof, optimal auctions would scale with $n$, the number of bidders, not the number of slots.}
    \item \textbf{Ex Post Incentive Compatible for Bidders.} The mechanism has a truthful {\em ex post} Nash equilibrium for  bidders.
    \item \textbf{Individually Rational.} If bidders receive a slot, they pay at most their bid; otherwise, they pay nothing.
    \item \textbf{DSIC for Myopic Miners.} In each period, it is a weakly dominant strategy for miners to implement the intended allocation rule.
    
    \item \textbf{Stable.} The mechanism converges to a stable state.
    \item \textbf{Approximately Welfare Optimal.} If in each period, $n$ i.i.d. bidders are drawn from distribution $F$, then the mechanism converges to an equilibrium price.\footnote{We consider asymptotic stability with respect to the distribution $F$ over bidder valuations as our formal definition for a stable equilibrium (Definition~\ref{def:stability}).} At the equilibrium price, the expected welfare is $\Omega(\opt)$, where $\mathit{\opt}$ is the optimal expected social welfare.
\end{itemize}

\subsection{Illustrative mechanisms}
Next, to motivate the design of dynamic mechanisms, we highlight undesirable properties from traditional static mechanisms.  Table~\ref{tab:single-block-mechanisms} summarizes the properties of existing proposals. We also provide further discussion in Appendix~\ref{sec:summary-mechanisms}. 
\begin{table}
\begin{tabularx}{\linewidth}{XXX}
\toprule
\textbf{Static mechanisms} & \textbf{IC for bidders} & \textbf{DSIC for miners} \\
\midrule
\midrule
First-Price~\eqref{def:first-price} & No & Yes \\
Second-Price~\eqref{def:second-price} & Yes & No\\
DPP~\eqref{def:posted-price} & No & Yes \\
RPP~\eqref{def:posted-price} & Yes & Yes \\
EIP-1559~\eqref{sec:eip-1559} & Sometimes & Yes \\
Monopolistic Price~\eqref{sec:monopolistic}  & Approximately & Yes \\
RSOP~\eqref{sec:rsop} & Yes & No\\
Modified GSP~\eqref{sec:gsp} & Approximately & Approximately\\
\bottomrule
\end{tabularx}
\caption{Bidder payments and miner revenue under various static mechanisms. $q_t$ denotes a posted-price. At time $t$, $M_t$ denote the set of active bidders and $B_t$ denotes the allocation.}
\label{tab:single-block-mechanisms}
\end{table}
\begin{definition}[Static mechanism; Dynamic mechanism]\label{def:static}
A mechanism $(\vec x, \vec p)$ is \emph{static} if its allocation and payment rule depend only on the most recent bids and allocation. That is, for all histories $B_1, \ldots, B_{t-1}$, we have $x_i(B_1, \ldots, B_{t-1}, M_t) = x_i(M_t)$ and $p_i(B_1, \ldots, B_t) = p_i(B_t)$. A mechanism is \emph{dynamic} if it is not static.
\end{definition}
\begin{definition}[First-price auction]\label{def:first-price}
A \emph{first-price} (FP) auction is a static mechanism where the top $m$ bidders receive a slot and pay their bid.
\end{definition}

\citet{nakamoto2008bitcoin} proposed the first-price auction as the default transaction fee mechanism for Bitcoin. 

An important property of the first-price auction is that it is DSIC for myopic miners. Miners can only lose revenue by replacing a real bid $i \in M_t$ with a fake bid $j \in F_t$. As a downside, the first-price auction is known to provide a bad user experience as a transaction fee mechanism because it has no {\em ex post} Nash equilibrium. A bidder wants to bid the minimum possible that would allow her to win a slot, but that requires the bidder to monitor all other bids and adapt her strategy.
\begin{definition}[Second-price auction]\label{def:second-price}
A \emph{second-price} (SP) auction is a static mechanism where the top $m$ bidders receive a slot and pay the $(m+1)$-st highest bid.
\end{definition}
Unlike the first-price auction, truthful bidding $b_i(v_i) = v_i$ is an {\em ex post} Nash equilibrium for the second-price auction, making it easy for  bidders to participate. However, to implement its payment rule, miners must commit to truthfully reporting the value of  $(m+1)$-st highest bid with the mechanism (note the payment rule depends not only on the allocation $B_t$ but also on unallocated transactions $M_t \setminus B_t$). Thus miners are better off by reporting a fake bid that equals the $m$-th highest bid, and the second-price auction is not DSIC for myopic miners.

Next, we consider posted-price mechanisms and two possible allocation rules.
\begin{definition}[Maximal Allocation]
For a set of active bidders $M$, we say $S \subseteq M$ is a \emph{maximal allocation} with respect to $M$ if the following is true:
\begin{itemize}\itemsep0em
\item $S$ contains at most $m$ bidders.
\item If $S$ contains less than $m$ bidders, $S = M$.
\end{itemize} 
\end{definition}

Let $2^M$ denote the {\em  collection of maximal allocations} with respect to $M$:
\begin{equation}\label{eq:maximal-collection}
2^{M} := \{B \subseteq M : \text{$|B| \leq m$ is a maximal allocation}\}.
\end{equation}
\begin{definition}[Maximum value allocation rule]\label{def:mv}
The \emph{maximum value} (MV) allocation rule $\vec x^{MV}(M)$ selects a maximal allocation $B \in 2^M$ that maximizes bid values: $B = \arg\max_{B' \in 2^M} \sum_{i \in B'} b_i$.
\end{definition}
Let $\mathit{Unif}(\{B_1, \ldots, B_\ell\})$ denotes the uniformly random distribution over $\{B_1, \ldots, B_\ell\}$.
\begin{definition}[Random maximal allocation  rule]\label{def:rm}
The \emph{random maximal} (RM) allocation rule $\vec x^{RM}(M)$ samples a maximal allocation $B \in 2^M$ from the distribution $\unif(2^M)$.
\end{definition}
\begin{definition}[Posted-price mechanism]\label{def:posted-price} A \emph{posted-price} (PP) mechanism with posted price $q$ is a static mechanism $(\vec x, \vec p)$ where each bidder that receives a slot pays $q$ and all other bidders pay nothing. We refer to $(\vec x(q), \vec p(q))$ as the posted-price mechanism with posted price $q$. A posted-price mechanism with posted price $q$ is a \emph{randomized posted-price} (RPP) mechanism if for all $M$, $\vec x(M) = \vec x^{RM}(M(q))$ is the RM allocation rule on bids $M(q) = \{i \in M : b_i \geq q\}$. A posted-price mechanism with posted price $q$ is a \emph{deterministic posted-price} (DPP) mechanism if for all $M$, $\vec x(M) = \vec x^{MV}(M(q))$ is the MV allocation rule on bids $M(q)$.
\end{definition}
Posted-price mechanisms are DSIC for myopic miners since they commit to a fixed-price that miners cannot influence, and thus, the best a miner can do is accept a maximal allocation of bids priced at or above the price $q$. 

Depending on how the posted-price mechanism allocates slots, the mechanism can also be incentive compatible for bidders. The posted-price mechanism with the RM allocation rule is {\em ex post} IC since any bidder with $b_i \geq q$ has an equal chance to receive a slot irrespective of its bid. On the other hand, the posted-price mechanism with the MV allocation rule is not {\em ex post} IC. Whenever more than $m$ bidders have value $v_i \geq q$, there exists a bidder with value $v_i \geq q$ that will not receive a slot. If that bidder bids $b_i = \infty$ instead of $v_i$, they receive the slot and pay $q \leq v_i$.
\subsection{Discussion}
The first-price, second-price, and posted-price mechanisms are individually rational and require little communication with the blockchain. This communication efficiency is a first-order concern for decentralized blockchains because any information stored in the blockchain must propagate through the whole network. In particular, these mechanisms require miners to report $O(m)$ bids to the mechanism to implement the payment rule---the communication scales as the number of slots, not the number of bidders.  

Posted-price mechanisms are also naturally robust to miners' strategic behavior and are incentive compatible for bidders provided the posted price is high enough (so that the supply is greater than the demand at the posted price). While first-price auctions are not incentive compatible for bidders, they are resilient to manipulation by miners. Finally, second-price auctions are incentive compatible for bidders but not resilient to manipulation by miners.

In terms of social welfare, the second-price auction is efficient since truthful bidding is a dominant strategy and the highest bidders receive block slots. In theory, first-price auctions are also efficient assuming bids are drawn i.i.d. from some distribution $F$ and under the assumption bidders follow a Bayes-Nash equilibrium. However, bids in a first-price auction do not typically converge to an equilibrium in repeated settings, as was observed for paid search engines~\cite{edelman2007strategic}, and the current blockchain transaction fee market is an example of the kind of instability that can occur. Sequential posted-price mechanisms---where the auctioneer visits bidders to make a take-or-leave offer---can obtain $1/2$ of the optimal welfare, but require prior knowledge on the distribution $F$ and the demand $n$~\cite{chawla2010multi, kleinberg2012matroid}.

\section{Dynamic Posted-Price Mechanisms}\label{sec:dpp}

The rationale behind a dynamic posted-price mechanism is that even though the demand curve is unknown and dynamic over time, we can hope to learn a proxy for the market-clearing price from transactions included in preceding blocks. We must be careful, though, since we must align the mechanism with miners' incentives to implement the allocation rule faithfully.
\begin{definition}[Price Update Rule]\label{def:update-rule}
A \emph{price update rule} is a real-valued function $T$ from positive $q$ and feasible allocation $B$ with respect to $q$ to the subsequent price $T(q, B)$ where $b_i \geq q$ for all $i \in B$.
\end{definition}
\vspace{1mm}\noindent\textit{Remark:} we use $B$ to refer to the bidders that receive a slot. Given $i \in B$, the update rule has access to the bid $b_i$ of bidder $i$.
\begin{definition}[Dynamic Posted-Price Mechanism]\label{def:dynamic-posted-price}
A \emph{dynamic posted-price} (DPP) mechanism $(\vec x, \vec p, T)$ is a dynamic mechanism endowed with a price update rule $T$. Let positive $q_0$ denote the initial posted price. Given history $B_1, B_2, \ldots, B_{t-1}$ and active bidders $M_t$, define the posted price $q_t := T(q_{t-1}, B_{t-1})$ at time $t$. The mechanism $(\vec x, \vec p, T)$ is a {\em randomized dynamic posted-price} (RDPP) mechanism if, at time step $t$, it implements the RM allocation rule:
$$\vec x(B_1, B_2, \ldots, B_{t-1}, M_t) := \vec x^{RM}(M_t(q_t)).$$
The mechanism $(\vec x, \vec p, T)$ is a {\em deterministic dynamic posted-price} (DDPP) mechanism if, at time step $t$, it implements the MV allocation rule:
$$\vec x(B_1, B_2, \ldots, B_{t-1}, M_t) := \vec x^{MV}(M_t(q_t)).$$
\end{definition}
\smallskip

We study the {\em stability} of the expected value of the random operator $T(q, B)$ when at each time step, $n$ bidders with values $\vec v = (v_1, v_2, \ldots, v_n)$ are drawn from distribution $F^n$, with the assumption that bidders are truthful (bid their value). We define the real-valued function $E_T$ to be
\begin{equation}\label{eq:et}
E_T(q) := \mathbb E_{\vec v \sim F^n}\left[T(q, B)\right],
\end{equation}
where the expectation taken is over $v_1, v_2, \ldots, v_n$ and over the allocation
$$B = \{i \in M: x_i(q) = 1\}.$$
Before we formally define stability for a DPP mechanism, we define the fixed point iteration of a real-valued function. We use $(X, d)$ to refer to a complete metric space where $X \subseteq \mathbb R$ and $d$ is the Euclidean norm.
\begin{definition}[$n$-iterate]
The \emph{$n$-iterate} of function $f : X \to X$, for integer $n \geq 0$, is the function 
\begin{equation}
f^n(q) := \begin{cases} 
 q \quad & \text{for $n = 0$},\\
 f(f^{n-1}(q)) \quad & \text{for $n \geq 1$}. 
\end{cases}
\end{equation}
\end{definition}
A point $x$ is a \emph{fixed point} for function $f : X \to X$ if and only if $f(x) = x$.
\begin{definition}[Fixed-point iteration]\label{def:fixed point-iteration}
Given a continuous function $f : X \to X$ and an initial point $q_0 \in X$, the \emph{fixed-point iteration} of $f$ starting from $q_0$ is the sequence of function evaluations
\begin{equation}
q_0, f(q_0), f^2(q_0), f^3(q_0), \ldots
\end{equation}
If the sequence converges to some $q^*$, then from the continuity of $f$, $q^*$ is a fixed point of $f$.
\end{definition}
\begin{definition}[Asymptotic Stability]\label{def:stability}
A DPP mechanism $(\vec x, \vec p, T)$ is {\em asymptotically stable} with respect to distribution $F$ over $n$ bidder valuations if, for {\em any} positive initial price $q_0$, the fixed-point iteration of $E_T$ starting from $q_0$ converges to the {\em equilibrium price} $q^*$. A DDP mechanism is {\em unstable} if it is not asymptotically stable.
\end{definition}
\begin{definition}[Approximately Welfare Optimal at Equilibrium]\label{def:welfare-equilibrium}
A DPP mechanism is {\em approximately welfare optimal} if it is asymptotically stable with equilibrium price $q^*$ and the welfare at equilibrium $\welfare(\vec x(q^*)) \geq \Omega(\opt)$ is a constant approximation of the optimal welfare.
\end{definition}
There are two main reasons why a DPP mechanism could be unstable: prices could grow unbounded, or prices could oscillate in a closed orbit. The proofs of Propositions~\ref{prop:instability} and~\ref{prop:twdpp-stable} give examples of mechanisms that are unstable when the distribution $F$ is a point mass.

The convergence of prices is also important in justifying our assumption that bidders are myopic and will immediately submit bids for the next auction. If prices were to oscillate, bidders might seek to time the moment they enter their bids, waiting for moments where prices are lower, driving more instability and uncertainty in the market.

\section{Asymptotically Stable Update Rules via Fixed Point Theory}\label{sec:fixed-point}
There is a rich theory on the study of fixed points ~\cite{collatz2014functional, herzog2013fixed} for monotone decreasing operators and in many settings, the existence of a fixed point is guaranteed. In general, we do not expect a function $f : \mathbb R \to \mathbb R$ to have a fixed point, for example $f(x) = x + 1$. However, if a continuous function $f$ is non-increasing, then $f$ always intersects the line $y(x) = x$. Let $z$ denote the $x$ value corresponding to this intersection. Then $z$ is a fixed point for function $f$, since $f(z) = y(z) = z$.
\begin{definition}[Lipschitz continuity]\label{def:lipschitz}
A real-valued function $f : X \to \real$ is {\em $L$-Lipschitz} if for all $x, y \in X$, $|f(x) - f(y)| \leq L\cdot |x- y|$. We say $f$ is {\em Lipschitz continuous} if there is a positive constant $L$ such that $f$ is $L$-Lipschitz. We say that $f$ is a {\em constant function} if and only if $f$ is $0$-Lipschitz.
\end{definition}
\begin{definition}[Contractive mapping]\label{def:contractive-mapping}
A {\em contractive mapping} is a function $f : X \to X$ with the property that $f$ is Lipschitz continuous with constant $0 \leq L < 1$.
\end{definition}
\begin{definition}[Monotone mixture]\label{def:monotone-mixture}
A {\em monotone mixture} $f$ with {\em kernel} $g : X \to X$ is a function $f(x) = \alpha \cdot g(x) + (1-\alpha)\cdot x$ where the kernel is nonconstant, nonincreasing and $\alpha \in [0, 1]$ is the convergence parameter.
\end{definition}
\begin{observation}\label{obs:equal-fixed point}
If $f(x) = \alpha g(x) + (1-\alpha) x$, then $x$ is a fixed point for $f$ if and only if $x$ is a fixed point for $g$.
\end{observation}
\begin{proof}
Assume $x = f(x)$, then $g(x) = x$. Conversely, assume $x = g(x)$, then $f(x) = x$.
\end{proof}
Thus, for any continuous nonincreasing function $g$, we can use Observation~\ref{obs:equal-fixed point} to construct a function $f$ that has the same fixed point as $g$. Then, we can search for a fixed point for $g$ by searching for a fixed point for $f$. For this, we use the following well known fact:
\begin{lemma}\label{lemma:contractive-update-rule}
If $f(x) = \alpha g(x) + (1-\alpha)x$ is a monotone mixture with a nonconstant $L$-Lipschitz kernel $g$, then for all $0 \leq \alpha \leq \frac{1}{L + 1}$, $f$ is $(1-\alpha)$-Lipschitz. Additionally, if $f$ maps $X$ to itself, the restriction of $f$ to $X$ is a contractive mapping.
\end{lemma}
\begin{proof}
Fix $x, y \in X$ and w.l.o.g. assume $x < y$, then $g(x) \geq g(y)$ is nonincreasing and
\begin{align*}
|f(x) - f(y)| &= |\alpha (g(x) - g(y)) + (1-\alpha)(x-y)|\\
&= |\alpha (g(x) - g(y)) - (1-\alpha)(y - x)|\\
&\leq \max\{\alpha |g(x) - g(y)|, (1-\alpha)|x- y|\}\\
&\leq \max\{\alpha L |x-y|, (1-\alpha)|x-y|\}\\
&\leq (1-\alpha)|x-y|.
\end{align*}
The last step is from the assumption $\alpha \cdot L + \alpha \leq 1$. This proves $f$ is $(1-\alpha)$-Lipschitz. For the ``additionally'' part, if $f$ maps $X$ to itself, the fact that $g$ is non-constant implies $L > 0$, thus $\alpha < 1$ and the restriction of $f$ to $X$ is a contraction.
\end{proof}
\begin{lemma}[Banach Contraction Principle~\cite{royden1988real}]\label{lemma:banach-contraction}
If a real-valued function $f : X \to X$ is a contraction, then $f$ has a unique fixed point $x^*$. To find $x^*$, start at any $x_0 \in X$, then $\lim_{n\to\infty} f^n(x_0) = x^*$.
\end{lemma}
Thus, if we design a DPP mechanism $(\vec x, \vec p, T)$ such that for distribution $F$ over bidder valuations the expected value $E_T$ is a monotone mixture with an $L$-Lipschitz kernel and convergence parameter $\alpha \leq \frac{1}{L+1}$, we will derive that $(\vec x, \vec p, T)$ is asymptotically stable with respect to $F$. This will be the main idea behind the proof that the WDPP mechanism is stable (Theorem~\ref{thm:wdpp}). The proof of stability of the UDPP mechanism and the TWDPP mechanism will be more involved because their kernels will not be monotone mixtures, but the proof will still rely on the Banach contraction principle.
\section{Welfare-Based Dynamic Posted-Price Mechanism}\label{sec:wdpp}
This section describes our first proposal and gives sufficient conditions for our mechanism to be asymptotically stable and to obtain a $1/2$-approximation of the optimal welfare. 

To maximize social welfare, we must guarantee the highest-valued transactions are included in each block. The challenge is that whenever there are more than $m$ bidders with bids $b_i \geq q_{t-1}$, the mechanism is not {\em ex post} IC for bidders when miners use the MV allocation rule. If we use the RM allocation rule, a low bid may receive a slot in place of a high bid. In this case, the mechanism must increase the posted price. A welfare-based approach uses bids of allocated bidders as a proxy for the degree of excess demand:
\begin{definition}[Welfare-based update rule]\label{def:welfare-rule} The {\em $\alpha$-welfare-based update rule} is the update rule:
\begin{equation}
\tw(q, B) := \alpha \frac{1}{m}\sum_{i \in B} b_i + (1-\alpha) q,
\end{equation}
where $\alpha \in [0, 1]$ is the convergence parameter.
\end{definition}
Observe that whenever $|B_t| = m$, $q_t = \tw(q_{t-1}, B_{t-1}) > q_{t-1}$ unless $b_i = q_t$ for all $i \in B_{t-1}$. Thus the hope is that the mechanism converges to a price $z$ at least as big as the market clearing price---the price where at most $m$ bidders have value $v_i \geq z$.

Unfortunately, miners' myopia is an unreasonable assumption for a dynamic posted-price mechanism with the RM allocation rule endowed with the welfare-based update rule. To see this, note that if the current price $q_{t-1}$ is smaller than the market-clearing price, there exists a price $q > q_{t-1}$ that also sells to $m$ buyers. Thus under a demand surge, the active miner might have a strict preference for implementing the MV allocation rule instead of the RM allocation rule to drive prices up for the next time when they become the active miner. Once miners signal a preference for adding the highest bids, each bidder with $v_i \geq q_t$ would have a dominant strategy to bid $b_i > v_i$ to maximize the probability of being selected. Thus, the dynamic posted-price mechanism would not be {\em ex post} IC for myopic bidders.\footnote{A possible counterargument to bidders overbidding comes from {\em Mempools}---global decentralized databases of pending transactions---where Bitcoin and other blockchains store bids until they are processed and added to the blockchain or removed after 48 hours. With Mempools, there is {\em carryover} from $M_t \setminus B_t$ to $M_{t+1}$ (if not served, bidders participate in the next step). Thus if a bidder bids $b_i > v_i$ in the hope of increasing their chances of being included in $B_t$, there is a risk their bid will carry over to the next time step. Since there is uncertainty about future prices (since $M_{t+1}, M_{t+2}, \ldots$ are unknown at time $t$), there is a risk bidder $i$ could pay $b_i > v_i$ and receive a negative utility. It is hard to model carryover, and incorporating Mempools into the model is an interesting avenue for future work. Here, we assume each bidder only participates once.}

Since a DPP mechanism with the welfare-based update rule and the RM allocation rule reduces to a DPP mechanism with the MV allocation rule, we study a DPP mechanism endowed with the welfare-based update rule and the MV allocation rule:
\begin{definition}[Welfare-Based Dynamic Posted-Price mechanism]\label{def:wdpp}
The {\em welfare-based dynamic posted-price} (WDPP) mechanism is a DDPP mechanism $(\vec x, \vec p, \tw)$ endowed with the welfare-based update rule $\tw$.
\end{definition}
The WDPP mechanism is not {\em ex post IC}. However, we show the WDPP mechanism is asymptotically stable assuming truthful bidding as a warm-up to studying {\em ex post IC} mechanisms in the following sections.
\subsection{Asymptotic stability of the WDPP mechanism}
We will now describe our main results for the WDPP mechanism. In particular, we show that by setting a sufficiently small $\alpha$, the WDPP mechanism is asymptotically stable and approximately welfare optimal under the mild assumption the revenue curve is Lipschitz continuous.
\begin{definition}[Demand Curve]
The {\em demand at price $q$} is the random variable $N(q) := \sum_{i = 1}^n \ind{v_i \geq q}$. A {\em demand curve} is a non-increasing real-valued function $D$ mapping a price $q$ to the expected number of buyers $D(q) := \e{N(q)}$ that would purchase at price $q$. The {\em demand curve with limited supply} is $\mde(p) :=  \e{\min\{m, N(q)\}}$.
\end{definition}
\begin{definition}[Revenue Curve]
A {\em revenue curve} is a real-valued function from price $q$ to the expected revenue $\rev(q) := q \cdot \mde(q)$ when selling a supply of $m$ slots at price $q$.
\end{definition}
We can show the WDPP mechanism is asymptotically stable as long as the revenue curve $\rev(q)$ is Lipschitz continuous and $\alpha$ is upper bounded by a function of the Lipschitz constant of $\rev(q)$.
\begin{theorem}\label{thm:wdpp}
If the revenue curve $\rev$ is $L$-Lipschitz and $\alpha \leq \frac{1}{L/m + 1}$, the WDPP mechanism $\left(\vec x, \vec p, \tw\right)$ is asymptotically stable with respect to $F^n$ and has a unique equilibrium price $q^*$. Moreover, it is approximately welfare optimal: $\welfare(\vec x(q^*)) \geq \frac{\opt}{2}$.
\end{theorem}
We defer the proof to the appendix. As a proof sketch recall that a mechanism is asymptotically stable if the iteration $E_{\tw}(q_0), E_{\tw}(q_1), \ldots$ converges to an equilibrium price $z$. Expanding $E_{\tw}(q_0)$, we have
$$E_{\tw}(q_0) = \alpha \frac{\welfare(\vec x(q_0))}{m} + (1-\alpha)q_0.$$
Since miners allocate block space to the highest bidders, we have that $\welfare(\vec x(q_0))$ is a decreasing function of $q_0$. Therefore, $E_{\tw}$ is a monotone mixture as defined in Section~\ref{sec:fixed-point}. One can show that if $\rev(q_0)$ is Lipschitz continuous, then so is the $\welfare(\vec x(q_0))$. Thus we can use the ideas of Section~\ref{sec:fixed-point} to prove that the WDPP mechanism is asymptotically stable.

To prove that the WDPP mechanism is approximately welfare optimal, recall that we need to show $\welfare(\vec x(z)) \geq \Omega(OPT)$ where $z$ is an equilibrium price. Note that $z$ is an equilibrium price if and only if  $E_{\tw}(z) = z$ and observe that $E{\tw}(z) = \alpha f(z) + (1-\alpha) z$ is a convex combination of $z$ with a function $f$ of $z$. Observe that $z$ is also a fixed point of $f$: $f(z) = z$. For the WDPP mechanism, $z = f(z) = \frac{\welfare(\vec x(z))}{m}$, which we use when showing $\welfare(\vec x(z)) \geq \frac{OPT}{2}$.

\vspace{1mm}\noindent\textit{Remark.} For all dynamic posted-price mechanisms $(\vec x, \vec p, T)$ we consider, we will always have that $E_{T}(z) = \alpha f(z) + (1-\alpha) z$ (i.e. $E_T$ is a convex combination of $z$ with some function $f$ of $z$). Thus if the price update rule $T$ is asymptotically stable, it converges to an equilibrium price $z$ where $f(z) = z$, which we use to show the mechanism is approximately welfare optimal.
\section{Utilization-Based Dynamic Posted-Price Mechanism}\label{sec:udpp}
To improve the robustness of the welfare-based dynamic posted-price mechanism to strategic behavior from miners, which in turn unravels the IC property for bidders, we can remove the influence of bid values on the price update when there is excess demand. That is, we can use a fixed multiplicative price change that is independent of bid values:
\begin{definition}[Utilization-Based Update Rule]\label{def:utilization-based} The {\em $(\alpha, \delta)$-utilization-based} update rule is the update rule
\begin{equation}
\tu(q,  B) := \alpha \frac{|B|}{m}(1+\delta)q + (1-\alpha) q.
\end{equation}
where $\alpha \in (0, 1)$ is the convergence parameter and $\delta \in (0, \infty)$.
\end{definition}
The utilization-based update rule with $\delta = 1$ and $\alpha = 1/8$ is the update rule used by EIP-1559 to dynamically set its posted price. While EIP-1559 uses a first-price auction to choose the allocation when there is excess demand, we adopt as the intended allocation rule the RM allocation rule which, in turn, removes incentives for non-truthful bidding by bidders:
\begin{definition}[Utilization-Based Dynamic Posted-Price Mechanism]\label{def:adpp}
The {\em utilization-based dynamic posted-price} (UDPP) mechanism is an RDPP mechanism $\left(\vec x, \vec p, \tu\right)$ endowed with the utilization-based update rule $\tu$.
\end{definition}
\begin{proposition}\label{prop:dpp-rm-ic-dsic}
The RDPP is ex post IC for myopic bidders and DSIC for myopic miners.
\end{proposition}
\begin{proof}
First, we show the RDPP mechanism is DSIC for myopic miners. For a particular time step, let $M$ denote the set of active bidders and let $q$ denote the current posted price. The active miner maximizes utility by selecting a maximal allocation from $M(q)$, and any maximal allocation weakly dominates all other allocations. Thus choosing a uniformly random maximal allocation is a weakly dominant strategy. This proves the RDPP mechanism is DSIC for myopic miners.

Next, we show the RDPP mechanism is {\em ex post} IC for myopic bidders. Given that miners select a uniformly random maximal allocation from $M(q)$, we first consider the case where bidder $i$ has value $v_i < q$. If bidder $i$ bids $b_i \geq q$, there is a positive probability the active miner selects bidder $i$ to receive a slot and pay $q$ resulting in a negative utility. Thus bidding $v_i$ strictly dominates bidding $b_i \geq q$ and weakly dominates bidding $b_i \in (v_i, q)$. Next, suppose bidder $i$ has value $v_i \geq q$. For bidder $i$ to receive a slot and pay $q$, she must bid $b_i \geq q$. Thus bidding $v_i$ strictly dominates bidding $b_i < q$. For any bid $b_i \geq q$, bidder $i$ has equal probability of winning a slot and paying $q$. Thus bidding $v_i$ weakly dominates bidding $b_i > v_i$. This proves truthful bidding weakly dominates bidding $b_i \neq v_i$, and the mechanism is IC for myopic bidders.
\end{proof}
\begin{corollary}\label{cor:adpp-ic-dsic}
The UDPP mechanism is {\em ex post} IC for myopic bidders and DSIC for myopic miners.
\end{corollary}
\vspace{1mm}\noindent\textit{Remark.} Proposition~\ref{prop:dpp-rm-ic-dsic} shows that a randomized dynamic posted-price mechanism shares the benefits of the tipless mechanism~\cite{roughgarden2020transaction} (both are {\em ex post} IC for myopic bidders and DSIC for myopic miners). However, they are vulnerable to off-chain agreements~\cite{roughgarden2010algorithmic} in that collusion between miners and bidders could improve the sum of their utilities: whenever the current posted price is smaller than the market-clearing price, the active miner can run an auction outside the blockchain to select the top $m$ transactions instead of randomizing the allocation. This deviation can improve the joint utility of miners and bidders since the RM allocation might allocate block space to a low value bidder in place of a higher value bidder. On the positive side, if the posted price is above the market-clearing price, off-chain agreements do not improve the joint utility of bidders and miners. Any additional revenue miners receive results in a higher payment from a bidder that would still receive block space if there were no off-chain agreement.

\vspace{1mm}\noindent\textit{Remark.} A dynamic posted-price mechanism with the welfare-based update rule and the RM allocation rule (Definition~\ref{def:rm}) is also DSIC for myopic miners. However, the crucial difference is that there is a clear and sensible out-of-model deviation by non-myopic miners to that rule, while this deviation has gone away in the new design.

Interestingly, the fact the UDPP mechanism ignores bid values when there is excess demand, which is important for incentive alignment, can lead to instability in instances in which the welfare-based dynamic-posted mechanism is asymptotically stable. This motivates the study of more stable update rules in Section~\ref{sec:twdpp}.\footnote{\citet{leonardos2021dynamical} provides additional examples of unstable behavior for the UDPP mechanism.}
\begin{proposition}\label{prop:instability}
There exists a distribution $F^2$ where the WDPP mechanism $\left(\vec x, \vec p, \tw\right)$ is asymptotically stable but the UDPP mechanism $\left(\vec x, \vec p, \tu\right)$ is unstable.
\end{proposition}
\begin{proof}
For any positive $v$, let $F$ be the distribution where each bidder has value $v$ with probability 1. Suppose there is a single slot for sale and two bidders drawn i.i.d. from $F$ in each time step. For the UDPP mechanism, if the current price $q_t \leq v$, the mechanism observes excess demand and the subsequent price increases to $q_t(1 + \alpha \delta)$. However, when the current price $q_t > v$, neither bidder will purchase, and the subsequent price decreases to $q_t(1-\alpha)$. This repeats {\em ad infinitum}, and the utilization-based update rule is unstable with respect to $F^2$.

For the welfare-based dynamic posted-price mechanism, whenever the current price $q_t > v$, neither bidder will purchase, and the subsequent price decreases to $q_t(1-\alpha)$. Thus eventually, the mechanism reaches a price $q_t \leq v$. Whenever the current price $q_t \leq v$, the active miner allocates a single slot and the subsequent price increases to $q_t + \alpha(v - q_t)$ where $v-q_t \geq 0$ is the surplus of the mechanism. Thus the subsequent price is $q_t < q_{t+1} = \alpha v + (1-\alpha)q_t \leq v$ whenever $q_t < v$ (and $q_{t+1} = v$ whenever $q_t = v$). This proves for all time steps $n \geq t$, $q_t \leq v$, and one bidder receives the slot. The sequence $q_t, q_{t+1}, \ldots$ is increasing and bounded by $v$. From completeness of the Euclidean space, the sequence has a limit point $z \leq v$. This proves the welfare-based update rule is asymptotically stable with respect to $F^2$. See Figure~\ref{fig:instability-1} for an empirical example of this behavior.
\end{proof}
By assuming the revenue curve is strictly concave and Lipschitz continuous we can derive a condition for the stability of the UDPP mechanism (and EIP-1559).
\begin{theorem}\label{thm:udpp-stability}
Let $F$ be a distribution over bidder valuations where values are bounded by $\bar a$ and $F$ induces the revenue curve $\rev(q)$ to be $L$-Lipschitz and strictly concave. Then if $\alpha \leq \frac{1}{L(1+\delta)/m+1}$, the UDPP mechanism $\left(\vec x, \vec p, \tu\right)$ is asymptotically stable with respect to $F$ and has unique equilibrium price.
\end{theorem}
The proof will require the same techniques of the proof for the stability of the TWDPP mechanism. We defer the proof to Section~\ref{sec:twdpp-stability}.
\section{The Truncated Welfare-Based Dynamic Posted-Price Mechanism}\label{sec:twdpp}
In this section, we propose the {\em truncated welfare-based dynamic posted-price} (TWDPP) mechanism that combines desirable features of the WDPP and the UDPP mechanisms. In particular, whenever $m$ or more bidders receive a slot, TWDPP behaves like the UDPP mechanism. Whenever fewer than $m$ bidders receive a slot, we use bid values to compute the subsequent price just like the WDPP mechanism, but to reduce the variance of the update rule, we  cap how much each bid contributes to the subsequent price.
\begin{definition}[Truncated Welfare-Based Update Rule]\label{def:truncated-rule}
The {\em truncated $(\alpha, \delta)$-welfare-based} update rule is the update rule
\begin{equation}\label{eq:ttw}
\ttw(q, B) := \begin{cases}
\alpha \frac{1}{m} \sum_{i \in B} \min\{b_i, (1+\delta) \cdot q\} + (1-\alpha) \cdot q & \text{for $|B| < m$,}\\
\alpha (1+\delta) \cdot q + (1-\alpha)q & \text{for $|B| = m$,}
\end{cases}
\end{equation}
where $\alpha \in (0, 1)$ is the convergence parameter and $\delta \in (0, \infty)$ is the truncation parameter.\footnote{An alternative update rule could always round $b_i$ to the interval $[(1-\delta)q, (1+\delta)q]$ so that $T(q, B) = \alpha\frac{1}{m}\sum_{i \in B} \max\{\min\{b_i, (1+\delta)q\}, (1-\delta)q\} + (1-\alpha)\cdot q$. Here we study the update rule in Equation~\ref{eq:ttw} to exclude out-of-model deviations where miners prefer to include higher value bids whenever $|B| = m$.}
\end{definition}

\begin{definition}[Truncated Welfare-Based Dynamic Posted-Price Mechanism]\label{def:twdpp}
The {\em truncated welfare-based dynamic posted-price} (TWDPP) mechanism is an RDPP mechanism $(\vec x, \vec p, \ttw)$ endowed with the truncated welfare-based update rule $\ttw$.
\end{definition}
Because the TWDPP mechanism is an RDPP mechanism, Proposition~\ref{prop:dpp-rm-ic-dsic} implies:
\begin{theorem}\label{thm:twdpp-ic-dsic}
The TWDPP mechanism is {\em ex post} IC for myopic bidders and DSIC for myopic miners.
\end{theorem}
Recall from Proposition~\ref{prop:instability} that there is a distribution where the UDPP mechanism is unstable, but the WDPP mechanism is stable. Since the TWDPP mechanism behaves like the UDPP mechanism when blocks are full, the TWDPP mechanism can be unstable when the WDPP mechanism is stable (simply consider the same distribution constructed in Proposition~\ref{prop:instability}). However, the following proposition shows the TWDPP mechanism can still provide advantages over the UDPP mechanism by constructing a distribution $F^n$ where the UDPP mechanism is unstable even when the demand $n$ is smaller than the supply $m$, but the TWDPP mechanism is stable.
\begin{proposition}\label{prop:twdpp-stable}
There exists a distribution $F^n$ with $n < m$ where the TWDPP mechanism $\left(\vec x, \vec p, \ttw\right)$ is asymptotically stable, but the UDPP $\left(\vec x, \vec p, \tu\right)$ mechanism is unstable.
\end{proposition}
We defer the proof to the appendix. To study the stability of the TWDPP mechanism, we define the expected value of the random operator $\ttw(q, \unif(2^{M(q)}))$:
\begin{equation}\label{eq:ettw}
E_{\ttw}(q) := \alpha \ktw(q) + (1-\alpha)q,
\end{equation}
where $\ktw$ is the kernel of the TWDPP mechanism:
\begin{align}\label{eq:ktw}
\ktw(q) &:= \mathbbm E\bigg[\frac{1}{m}\sum_{i = 1}^n \min\{v_i, (1+\delta)q\}\cdot x_i(q) \cdot \ind{N(q) < m}\\
&\qquad\qquad + (1+\delta)\cdot q \cdot \ind{N(q) \geq m}\bigg],
\end{align}
where the expected value is taken over the values $v_1, v_2, \ldots, v_n$ and random allocation $\unif(2^{M(q)})$. Recall $x_i(q) = 1$ if bidder $i$ with $v_i \geq q$ receives a slot and $N(q) = \sum_{i = 1}^n \ind{v_i \geq q}$.
\subsection{Stability}\label{sec:twdpp-stability}
Because the kernel $\ktw$ is neither increasing nor decreasing, we cannot directly apply the proof sketch in Section~\ref{sec:fixed-point} to show the TWDPP mechanism is asymptotically stable. Instead, we will show that fixed-point iteration on $E_{\ttw}$ starting from a positive $q_0$ converges to a fixed point under the assumption that $\ktw$ is Lipschitz continuous and strictly concave.
\begin{definition}[Concave function]\label{def:concave}
A real-valued function $f : X \to \real$ is {\em concave} if for any $x \neq y \in X$ and $0 < \alpha < 1$, $f((1-\alpha)x + \alpha y) \geq (1-\alpha)f(x) + \alpha f(y)$, and {\em strictly concave} if we have a strict inequality.
\end{definition}
Here, we show that for a function $f$ satisfying Assumption~\ref{assumption:concave-1}, the fixed-point iteration on $g(x) = \alpha f(x) + (1-\alpha)x$ (for some positive $\alpha < 1$) starting from a positive $x_0$ converges to a fixed point of $f$.\footnote{It is known that iterating an increasing function $f : \real_{\geq 0} \to \real_{\geq 0}$ will either converge to a fixed point or diverge~\cite{kennan2001uniqueness}. In Section~\ref{sec:fixed-point}, we showed that if $f$ is decreasing and Lipschitz continuous, then the fixed-point iteration of $g(x) = \alpha f(x) + (1-\alpha)x$ will converge to a unique fixed point by setting a sufficiently small positive $\alpha$. Here, we replace the monotonicity assumption by the assumption $f$ is strictly concave.}
\begin{assumption}\label{assumption:concave-1}
Suppose $X = [0, \bar a] \neq \emptyset$ is a finite interval and $f : \mathbb R_{\geq 0} \to \mathbb R_{\geq 0}$ is a real-valued function satisfying:
\begin{enumerate}[label=(\roman*)]
    \item \label{concave-2-0} $f$ is $L$-Lipschitz.
    \item \label{concave-2-1} The restriction of $f$ to $X$ is strictly concave.
    \item \label{concave-2-2} $f(x) = 0$ for all $x \geq \bar a$.
\end{enumerate}
\end{assumption}
\begin{assumption}\label{assumption:concave-2}
Assume $f : X \to X$ with nonempty interval $X = [0, \bar a]$ is a real-valued function satisfying:
\begin{enumerate}[label=(\roman*)]
\item \label{concave-0} $f$ is strictly concave and Lipschitz continuous.
\item \label{concave-1} $f(x) \geq 0$ for all $x \in X$.
\item \label{concave-3} Let $\bar x = \sup_{x \in X} f(x)$. Strict concavity implies $f$ is strictly increasing on interval $I = [0, \bar x)$ and strictly decreasing on interval $D = [\bar x, \sup X]$. Then assume the restriction of $f$ to $D$ is $L$-Lipschitz with $L < 1$.
\item \label{concave-4} There is a positive $b \in X$ such that $f(b) < b$. 
\end{enumerate}
\end{assumption}
The main idea is to reduce the problem of computing a fixed point of $f$ satisfying Assumption~\ref{assumption:concave-1} to finding a fixed point for another function $g$ satisfying Assumption~\ref{assumption:concave-2} such that $x$ is a fixed point of $f$ if and only if $x$ is a fixed point of $g$. The main observation is that whenever $f : \real_{\geq 0} \to \real_{\geq 0}$ is nonincreasing on some interval $D \subseteq \real_{\geq 0}$, we can force the restriction of $g$ to $D$ to be Lipschitz continuous with Lipschitz constant $L < 1$ by using a similar argument from Lemma~\ref{lemma:contractive-update-rule}. We can leverage this fact to show that the restriction of $g$ to some interval $[a, b] \subseteq D$ is a contraction---$g$ maps $[a, b]$ to itself and the restriction of $g$ to $[a, b]$ has Lipschitz constant $L < 1$. Then, we show that iterating $g$ starting from any positive $x_0$ will either converge to a point outside $[a, b]$ or eventually visit a point $x_t \in [a, b]$. The moment $x_t \in [a, b]$ for some $t$, from Banach contraction principle and the fact the restriction of $g$ to $[a, b]$ is a contractive mapping implies the fixed-point iteration of $g$ starting from $x_t$ will converge to a unique fixed point contained in $[a, b]$.
\begin{lemma}\label{lemma:concave-contraction}
If $f : X \to X$ satisfies Assumption~\ref{assumption:concave-2}, then the fixed-point iteration of $f$ starting from any $x_0 \in X$ converges to a fixed point of $f$.
\end{lemma}
\begin{lemma}\label{lemma:reduction}
Let $f : \real_{\geq 0} \to \mathbb R_{\geq 0}$ be a function satisfying Assumption~\ref{assumption:concave-1}. By setting $0 \leq \alpha \leq \frac{1}{L+1}$, the function $g(x) = \alpha f(x) + (1-\alpha) x$ satisfies Assumption~\ref{assumption:concave-2}. Then, the fixed-point iteration of $g$ starting from any positive $x_0$ converges to a fixed point of $f$.
\end{lemma}
We defer the proof of the Lemmas to Appendix~\ref{sec:twdpp-stability-appendix}. In Example~\ref{example:reduction}, we run simulations demonstrating the fixed-point iteration described in Lemma~\ref{lemma:reduction}. 

Lemma~\ref{lemma:reduction} immediately implies that if the kernel $\ktw$ is Lipschitz continuous and strictly concave on a bounded interval $[0, \bar a]$, the TWDPP mechanism is asymptotically stable. We can now state our main theorem.
\begin{theorem}\label{thm:twdpp-stability}
Let $F$ be a distribution over bidder valuations where all buyers have value at most $\bar a$. Assume $F$ induces the kernel $\ktw$ to be $L$-Lipschitz and strictly concave on interval $[0, \bar a]$. Then for any $\alpha \leq \frac{1}{L + 1}$, the TWDPP mechanism is asymptotically stable with respect to $F$.
\end{theorem}
\begin{proof}
We first check that $\ktw$ satisfies all properties in Assumption~\ref{assumption:concave-1}. The first and second properties are clear. For the third property, note that $\ktw(q) \geq 0$, and observe that no bidder purchases at price $> \bar a$. Thus $1-F(q) = 0$ and $\ktw(q) = 0$ for all $q \geq \bar a$. This proves $\ktw$ satisfies Assumption~\ref{assumption:concave-1}. From Lemma~\ref{lemma:reduction}, for any $\alpha \leq \frac{1}{L+1}$, the fixed-point iteration of
$$E_{\ttw}(q_0) = \alpha \ktw(q_0) + (1-\alpha)q_0$$
starting from positive $q_0$ converges to a fixed point $q^*$. This proves the intended TWDPP mechanism is asymptotically stable.
\end{proof}
The proof of the stability of the UDPP mechanism is identical to the proof of the stability of the TWDPP mechanism.
\begin{proof}[Proof of Theorem~\ref{thm:udpp-stability}]
Note the expected value of the price update rule $\tu(q, B)$ is
$$E_{\tu}(q) = \alpha \frac{\rev(q) (1+\delta)}{m} + (1-\alpha) q.$$
Observe $\rev(q) (1+\delta)/m$ is $\frac{L(1+\delta)}{m}$-Lipschitz and satisfies all conditions in Assumption~\ref{assumption:concave-1}. From Lemma~\ref{lemma:reduction}, $E_{\tu}(q)$ is asymptotically stable whenever $\alpha \leq \frac{1}{L(1+\delta)/m+1}$.
\end{proof}
\subsection{Welfare guarantee at equilibrium}\label{sec:twdpp-welfare}

Because the active miner selects a random maximal allocation whenever $N(q) > m$, it is more challenging to show the TWDPP mechanism is approximately welfare optimal. That is because the social welfare is not a monotone function of the posted price. As the posted price decreases, the welfare might decrease because the active miner has a higher chance to allocate block space to a lower bid instead of a higher bid. 

The main idea is to show that if $q$ is an equilibrium price, the probability that $N(q) \geq m$ is at most $1/(1+\delta)$. Whenever $N(q) \leq m$, miners deterministically allocate slots to the top $N(q)$ bidders. Thus we will show the welfare loss from the cases where $N(q) > m$ is a constant fraction of the optimal welfare.
\begin{theorem}\label{thm:twdpp-welfare}
If $q$ is an equilibrium price of the TWDPP mechanism $\left(\vec x, \vec p, \ttw\right)$, then
$$\welfare(\vec x(q)) \geq \frac{\opt}{2(1+\delta)}\min\{1, \delta\}.$$
\end{theorem}
We defer the proof to the appendix. By setting $\delta = 1$, we get that the TWDPP mechanism obtains 1/4 of the optimal welfare at equilibrium.

\section{Experimental Results}\label{sec:experimental-results}

In this section, we describe some experimental results supporting our theoretical findings. We defer additional experiments to Appendix~\ref{sec:other-experimental-results}. We implement an agent-based simulator (included in the supplemental materials) with which we simulate the following dynamic posted-price mechanisms:
\begin{itemize}
    \item \emph{Welfare-Based Dynamic Posted-Price} (WDPP) mechanism $\left(\vec x, \vec p, \tw\right)$: the deterministic dynamic posted-price mechanism with the maximum value allocation rule and the welfare-based update rule.
    \item \emph{Utilization-Based Dynamic Posted-Price} (UDPP) mechanism $\left(\vec x, \vec p, \tu\right)$: the randomized dynamic posted-price mechanism with the random maximal allocation rule and the utilization-based update rule.
    \item \emph{Truncated Welfare-based Dynamic Posted-Price} (TWDPP) mechanism $\left(\vec x, \vec p, \ttw\right)$: the randomized dynamic posted-price mechanism with the random maximal allocation rule and the truncated welfare-based update rule.
\end{itemize}
While UDPP and TWDPP mechanisms are {\em ex post} IC, the WDPP mechanism is not. However, for the simulations, we always assume truthful bidding (in this sense, the results are overly optimistic for WDPP). Recall UDPP mechanism is equivalent to the EIP-1559 protocol in the case that a bidder $i$ bids $b_i = 0$ and submits a bid cap $c_i = v_i$.

For the experimental setup, at each time step $t$, $n_t$ bidders are drawn i.i.d. from distribution $F_t$. We use the same random seed across all mechanisms. We set the convergence parameter $\alpha = 1/16$. We set the truncation parameter $\delta = 1$. Decreasing $\delta$ would reduce how much each bid contributes to increasing the subsequent price.

Throughout the simulations, we track three quantities for each time step $t$: the demand $n_t$, the posted price $q_t$, and the ratio between the achieved welfare $\sum_{i \in B_t} v_i$ and the optimal welfare $\sum_{i = 1}^{n_t} v_i \cdot x_i^*(\vec v_t)$, where $\vec v_t$ is drawn from  $F_t^{n_t}$, the $n_t$-dimensional product distribution on $F_t$.

\subsection{Excess demand}\label{sec:excess-demand}

As an illustration, consider the case with constant demand $n_t = 200$ above the supply of $m = 100$ and three different value distributions: a uniform distribution over $0$ to $200$, an exponential distribution with a mean of 100, and a scaled and shifted Pareto distribution with a median around 100. See Figure~\ref{fig:constant-demand}. 

The WDPP mechanism has the fastest convergence for distributions with low variance, but is highly volatile when the distribution has high variance (e.g., a Pareto distribution). This kind of high price volatility in WDPP  would also raise doubts about the assumption of bidder myopia since they could benefit by timing their bids. Because the UDPP mechanism ignores bid values, it converges to a higher average price than the TWDPP mechanism, with the effect of providing lower welfare. This phenomenon can be observed clearly for the exponential distribution. The TWDPP mechanism more robustly handles both the low- and high-variance cases, consistently achieving the best welfare of around $4/5$ of the optimal welfare (outperforming the worst-case $1/4$ bound).

\begin{figure*}
\centering
\includegraphics[width=\textwidth]{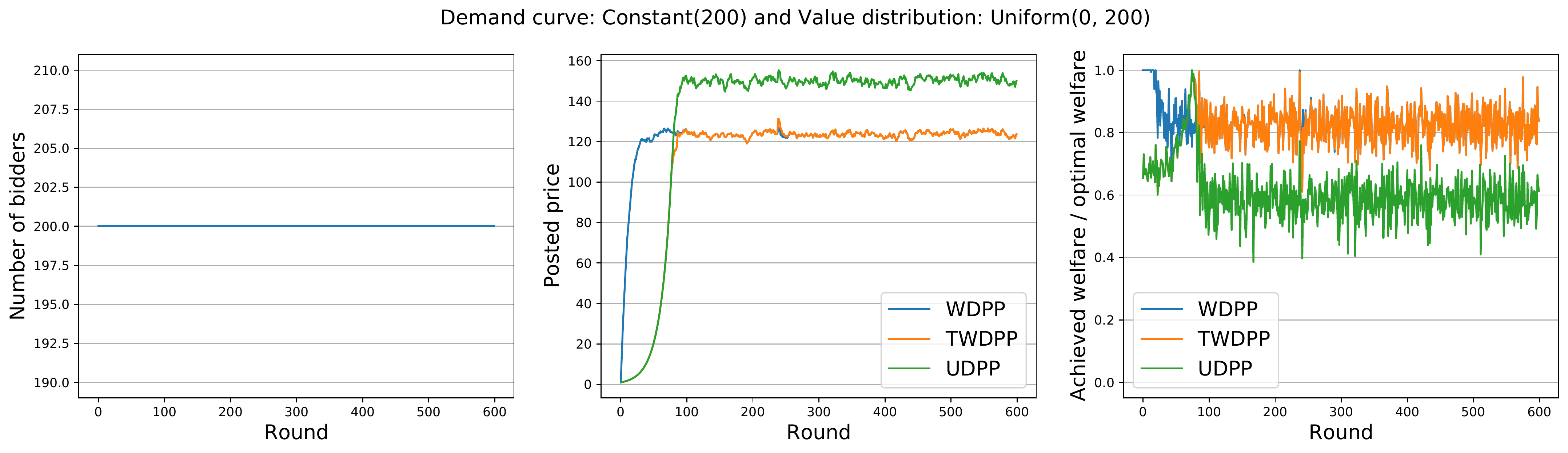}
\includegraphics[width=\textwidth]{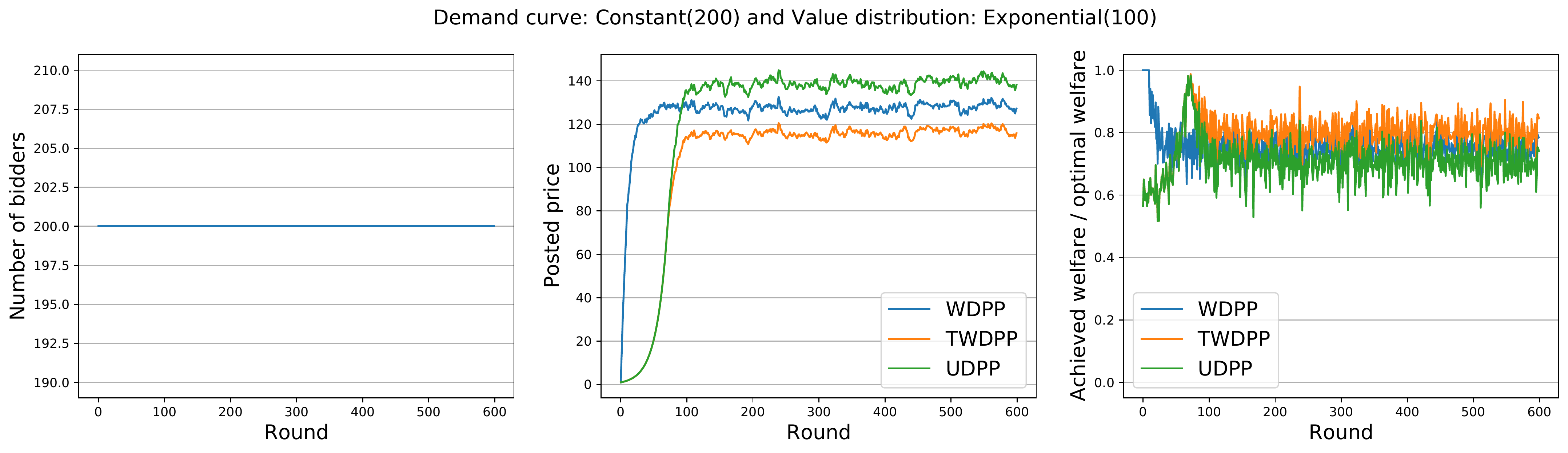}
\includegraphics[width=\textwidth]{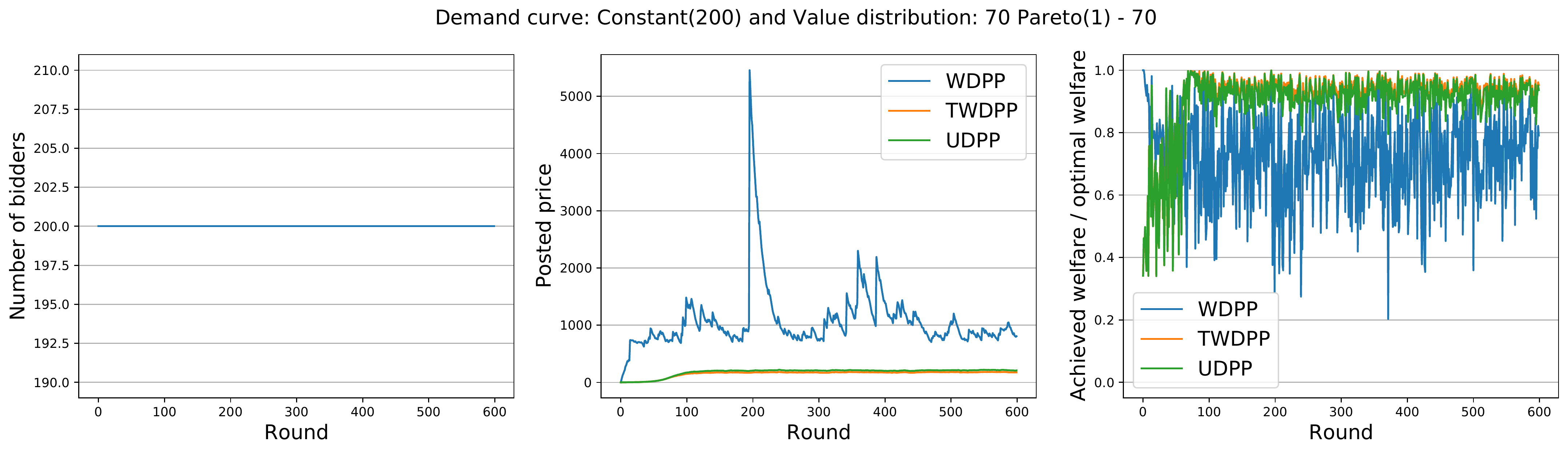}
\caption{Simulations with constant demand $n_t = 200$ for all $t$, constant supply $m = 100$, and uniform, exponential, and Pareto value distributions.}
\label{fig:constant-demand}
\end{figure*}

\section{Conclusion}\label{sec:conclusion}

First-price auctions have been the standard transaction fee mechanism for blockchains since Bitcoin's inception. While incentive-aligned for miners, first-price auctions provide a bad user experience for users under high competition when it is difficult to estimate how much to bid to get a transaction accepted. Moreover, typical alternatives from mechanism design theory do not work in the blockchain setting because a miner may deviate from the intended auction allocation rules (with the effect of also hiding information from other participants).

The question, then, is how to design a transaction fee mechanism that provides a good user experience and is at the same time robust to manipulation by miners. The Ethereum community has proposed and recently adopted EIP-1559, which is a posted-price and first-price hybrid. EIP-1559 reduces to the incentive-compatible UDPP mechanism that we study when the demand at the current posted price is lower than the supply, but it is not incentive compatible otherwise.

We have proposed the {\em Truncated Welfare-based Dynamic Posted-Price (TWDDP) mechanism}. This mechanism is incentive compatible for myopic bidders and aligns incentives for myopic miners to follow the rules, even during periods of market instability. It converges to a stable price under strict concavity and Lipschitz continuity assumptions on a quantity derived from the distribution of bidder valuations. At this stable price, the mechanism is also approximately welfare optimal. Our experimental results, conducted in simulation, confirm our theoretical findings.

\subsection{Future work}

This line of work leaves many interesting open questions. As with previous work, we study transaction fee mechanisms under the assumption that miners are myopic. However, as blockchain mining becomes more centralized through mining pools, miners may become motivated to take on non-myopic strategies. Thus it is interesting to understand the strategic behavior of non-myopic miners, seeking bounds on gains available to miners for deviating from the intended allocation rule as well as rules that are robust even to forward-looking miners.

Another interesting direction is to consider the effect of new transaction fee mechanisms on miner incentives to follow the consensus protocol.  Block withholding, for example, where a miner hides one or more blocks in an attempt to fork honest blocks from the blockchain, is  a forward-looking strategy that can be effective for miners who are motivated by block rewards~\cite{eyal2014majority, sapirshtein2016optimal, ferreira2021proof} or transaction fees~\cite{carlsten2016instability}. The redesign of the transaction fee mechanism modifies the incentives of miners, and leaves the interesting open question of understanding the impact of this change on block withholding strategies.

Lastly, while we model single-dimensional auctions in which bidders have distinct private values but identical transaction sizes, the transaction fee auctions that take place within Ethereum more closely resemble {\em knapsack auctions}, where different transactions can occupy different amounts of block space. Thus, a relevant question for transaction fee markets is that of the design of credible and communication-efficient mechanisms for multi-dimensional settings such as knapsack auctions~\cite{akbarpour2020credible, ferreira2020credible, daskalakis2020simple}.

\bibliographystyle{plainnat}
\bibliography{refs}

\begin{thebibliography}{40}
\providecommand{\natexlab}[1]{#1}
\providecommand{\url}[1]{\texttt{#1}}
\expandafter\ifx\csname urlstyle\endcsname\relax
  \providecommand{\doi}[1]{doi: #1}\else
  \providecommand{\doi}{doi: \begingroup \urlstyle{rm}\Url}\fi

\bibitem[Akbarpour and Li(2020)]{akbarpour2020credible}
Mohammad Akbarpour and Shengwu Li.
\newblock Credible auctions: A trilemma.
\newblock \emph{Econometrica}, 88\penalty0 (2):\penalty0 425--467, 2020.

\bibitem[Basu et~al.(2019)Basu, Easley, O'Hara, and Sirer]{basu2019towards}
Soumya Basu, David Easley, Maureen O'Hara, and Emin Sirer.
\newblock Towards a functional fee market for cryptocurrencies.
\newblock \emph{Available at SSRN 3318327}, 2019.

\bibitem[Bergemann and V\"{a}lim\"{a}ki(2010)]{bergemann10}
Dirk Bergemann and Juuso V\"{a}lim\"{a}ki.
\newblock The dynamic pivot mechanism.
\newblock \emph{Econometrica}, 78:\penalty0 771--789, 2010.

\bibitem[Buterin et~al.()Buterin, Conner, Dudley, Slipper, and
  Norden]{buterin2019eip1559}
Vitalik Buterin, Eric Conner, Rick Dudley, Matthew Slipper, and Ian Norden.
\newblock Ethereum improvement proposal 1559: Fee market change for eth 1.0
  chain.
\newblock \url{https://github.com/ethereum/EIPs/blob/master/EIPS/eip-1559.md}.

\bibitem[Carlsten et~al.(2016)Carlsten, Kalodner, Weinberg, and
  Narayanan]{carlsten2016instability}
Miles Carlsten, Harry Kalodner, S~Matthew Weinberg, and Arvind Narayanan.
\newblock On the instability of bitcoin without the block reward.
\newblock In \emph{Proceedings of the 2016 ACM SIGSAC Conference on Computer
  and Communications Security}, pages 154--167, 2016.

\bibitem[Cavallo et~al.(2010)Cavallo, Parkes, and Singh]{cavallo10}
Ruggiero Cavallo, David~C. Parkes, and Satinder Singh.
\newblock Efficient mechanisms with dynamic populations and dynamic types.
\newblock Technical report, Harvard University, 2010.

\bibitem[Chawla et~al.(2010)Chawla, Hartline, Malec, and
  Sivan]{chawla2010multi}
Shuchi Chawla, Jason~D Hartline, David~L Malec, and Balasubramanian Sivan.
\newblock Multi-parameter mechanism design and sequential posted pricing.
\newblock In \emph{Proceedings of the forty-second ACM symposium on Theory of
  computing}, pages 311--320, 2010.

\bibitem[Collatz(2014)]{collatz2014functional}
Lothar Collatz.
\newblock \emph{Functional analysis and numerical mathematics}.
\newblock Academic Press, 2014.

\bibitem[Daskalakis et~al.(2020)Daskalakis, Fishelson, Lucier, Syrgkanis, and
  Velusamy]{daskalakis2020simple}
Constantinos Daskalakis, Maxwell Fishelson, Brendan Lucier, Vasilis Syrgkanis,
  and Santhoshini Velusamy.
\newblock Simple, credible, and approximately-optimal auctions.
\newblock In \emph{Proceedings of the 21st ACM Conference on Economics and
  Computation}, pages 713--713, 2020.

\bibitem[Edelman and Ostrovsky(2007)]{edelman2007strategic}
Benjamin Edelman and Michael Ostrovsky.
\newblock Strategic bidder behavior in sponsored search auctions.
\newblock \emph{Decision support systems}, 43\penalty0 (1):\penalty0 192--198,
  2007.

\bibitem[Edelman et~al.(2007)Edelman, Ostrovsky, and
  Schwarz]{edelman2007internet}
Benjamin Edelman, Michael Ostrovsky, and Michael Schwarz.
\newblock Internet advertising and the generalized second-price auction:
  Selling billions of dollars worth of keywords.
\newblock \emph{American economic review}, 97\penalty0 (1):\penalty0 242--259,
  2007.

\bibitem[Essaidi et~al.(2022)Essaidi, Ferreira, and
  Weinberg]{essaidi21credible}
Meryem Essaidi, Matheus~V.X. Ferreira, and S.~Matthew Weinberg.
\newblock Credible, strategyproof, optimal, and bounded expected-round
  single-item auctions for all distributions.
\newblock In \emph{Forthcoming in Innovations in Theoretical Computer Science},
  ITCS ’22, New York, NY, USA, 2022. Association for Computing Machinery.

\bibitem[Eyal and Sirer(2014)]{eyal2014majority}
Ittay Eyal and Emin~G{\"u}n Sirer.
\newblock Majority is not enough: Bitcoin mining is vulnerable.
\newblock In \emph{International conference on financial cryptography and data
  security}, pages 436--454. Springer, 2014.

\bibitem[Feigenbaum and Shenker(2004)]{feigenbaum2004distributed}
Joan Feigenbaum and Scott Shenker.
\newblock Distributed algorithmic mechanism design: Recent results and future
  directions.
\newblock In \emph{Current Trends in Theoretical Computer Science: The
  Challenge of the New Century Vol 1: Algorithms and Complexity Vol 2: Formal
  Models and Semantics}, pages 403--434. World Scientific, 2004.

\bibitem[Ferreira and Weinberg(2020)]{ferreira2020credible}
Matheus~VX Ferreira and S~Matthew Weinberg.
\newblock Credible, truthful, and two-round (optimal) auctions via
  cryptographic commitments.
\newblock In \emph{Proceedings of the 21st ACM Conference on Economics and
  Computation}, pages 683--712, 2020.

\bibitem[Ferreira and Weinberg(2021)]{ferreira2021proof}
Matheus~VX Ferreira and S~Matthew Weinberg.
\newblock Proof-of-stake mining games with perfect randomness.
\newblock In \emph{Proceedings of the 22nd ACM Conference on Economics and
  Computation}, pages 433--453, 2021.

\bibitem[Goldberg et~al.(2006)Goldberg, Hartline, Karlin, Saks, and
  Wright]{goldberg2006competitive}
Andrew~V Goldberg, Jason~D Hartline, Anna~R Karlin, Michael Saks, and Andrew
  Wright.
\newblock Competitive auctions.
\newblock \emph{Games and Economic Behavior}, 55\penalty0 (2):\penalty0
  242--269, 2006.

\bibitem[Graffeo()]{graffeo}
Emily Graffeo.
\newblock Ether mining revenue and transaction volume soared to record highs in
  may, report says.
\newblock URL
  \url{https://markets.businessinsider.com/news/currencies/ether-mining-revenue-eth-transaction-uncle-block-subsidy-fee-2021-6}.

\bibitem[Hajiaghayi(2005)]{hajiaghayi2005online}
Mohammad~T Hajiaghayi.
\newblock Online auctions with re-usable goods.
\newblock In \emph{Proceedings of the 6th ACM conference on Electronic
  commerce}, pages 165--174, 2005.

\bibitem[Herzog and Kunstmann(2013)]{herzog2013fixed}
Gerd Herzog and Peer~Chr Kunstmann.
\newblock A fixed point theorem for decreasing functions.
\newblock \emph{Numerical Functional Analysis and Optimization}, 34\penalty0
  (5):\penalty0 530--538, 2013.

\bibitem[Huberman et~al.(2017)Huberman, Leshno, and
  Moallemi]{huberman2017monopoly}
Gur Huberman, Jacob Leshno, and Ciamac~C Moallemi.
\newblock Monopoly without a monopolist: An economic analysis of the bitcoin
  payment system.
\newblock \emph{Bank of Finland Research Discussion Paper}, \penalty0 (27),
  2017.

\bibitem[Kennan(2001)]{kennan2001uniqueness}
John Kennan.
\newblock Uniqueness of positive fixed points for increasing concave functions
  on rn: An elementary result.
\newblock \emph{Review of Economic Dynamics}, 4\penalty0 (4):\penalty0
  893--899, 2001.

\bibitem[Kleinberg and Weinberg(2012)]{kleinberg2012matroid}
Robert Kleinberg and S~Matthew Weinberg.
\newblock Matroid prophet inequalities.
\newblock In \emph{Proceedings of the forty-fourth annual ACM symposium on
  Theory of computing}, pages 123--136, 2012.

\bibitem[Lavi and Nisan(2004)]{lavi2004competitive}
Ron Lavi and Noam Nisan.
\newblock Competitive analysis of incentive compatible on-line auctions.
\newblock \emph{Theoretical Computer Science}, 310\penalty0 (1-3):\penalty0
  159--180, 2004.

\bibitem[Lavi et~al.(2019)Lavi, Sattath, and Zohar]{lavi2019redesigning}
Ron Lavi, Or~Sattath, and Aviv Zohar.
\newblock Redesigning bitcoin's fee market.
\newblock In \emph{The World Wide Web Conference}, pages 2950--2956, 2019.

\bibitem[Leonardos et~al.(2021)Leonardos, Monnot, Reijsbergen, Skoulakis, and
  Piliouras]{leonardos2021dynamical}
Stefanos Leonardos, Barnab{\'e} Monnot, Dani{\"e}l Reijsbergen, Stratis
  Skoulakis, and Georgios Piliouras.
\newblock Dynamical analysis of the eip-1559 ethereum fee market.
\newblock \emph{arXiv preprint arXiv:2102.10567}, 2021.

\bibitem[Monderer and Tennenholtz(1999)]{monderer1999distributed}
Dov Monderer and Moshe Tennenholtz.
\newblock Distributed games: From mechanisms to protocols.
\newblock \emph{AAAI/IAAI}, 99:\penalty0 32--37, 1999.

\bibitem[Myerson(1981)]{myerson1981optimal}
Roger~B Myerson.
\newblock Optimal auction design.
\newblock \emph{Mathematics of operations research}, 6\penalty0 (1):\penalty0
  58--73, 1981.

\bibitem[Nakamoto(2009)]{nakamoto2008bitcoin}
Satoshi Nakamoto.
\newblock Bitcoin: A peer-to-peer electronic cash system, 2009.
\newblock URL \url{http://www.bitcoin.org/bitcoin.pdf}.

\bibitem[O'Neal(2020)]{o_neal_2020}
Stephen O'Neal.
\newblock Ethereum miners vote to increase gas limit, causing community debate,
  Jun 2020.
\newblock URL
  \url{https://cointelegraph.com/news/ethereum-miners-vote-to-increase-gas-limit-causing-community-debate}.

\bibitem[Parkes(2007)]{parkes2007online}
David~C Parkes.
\newblock Online mechanisms.
\newblock In Noam Nisan, Tim Roughgarden, Eva Tardos, and Vijay Vazirani,
  editors, \emph{Algorithmic Game Theory}, pages 411--439. Cambridge University
  Press, 2007.

\bibitem[Parkes and Shneidman(2004)]{parkes2004distributed}
David~C. Parkes and Jeffrey Shneidman.
\newblock Distributed implementations of vickrey-clarke-groves mechanisms.
\newblock In \emph{Proceedings of the Third International Joint Conference on
  Autonomous Agents and Multiagent Systems - Volume 1}, AAMAS '04, page
  261–268, USA, 2004. IEEE Computer Society.
\newblock ISBN 1581138644.

\bibitem[Parkes and Singh(2003)]{parkes2004mdp}
David~C. Parkes and Satinder Singh.
\newblock An mdp-based approach to online mechanism design.
\newblock In \emph{Proceedings of the 16th International Conference on Neural
  Information Processing Systems}, NIPS'03, page 791–798, Cambridge, MA, USA,
  2003. MIT Press.

\bibitem[Ranjan and Beiko(2020)]{eip1559outreach}
Pooja Ranjan and Tim Beiko.
\newblock Eip-1559 community outreach report.
\newblock
  \url{https://medium.com/ethereum-cat-herders/eip-1559-community-outreach-report-aa18be0666b5/},
  2020.
\newblock Accessed: 2021-02-11.

\bibitem[Roughgarden(2010)]{roughgarden2010algorithmic}
Tim Roughgarden.
\newblock Algorithmic game theory.
\newblock \emph{Communications of the ACM}, 53\penalty0 (7):\penalty0 78--86,
  2010.

\bibitem[Roughgarden(2020)]{roughgarden2020transaction}
Tim Roughgarden.
\newblock Transaction fee mechanism design for the ethereum blockchain: An
  economic analysis of eip-1559.
\newblock \emph{arXiv preprint arXiv:2012.00854}, 2020.

\bibitem[Royden and Fitzpatrick(1988)]{royden1988real}
Halsey~Lawrence Royden and Patrick Fitzpatrick.
\newblock \emph{Real analysis}, volume~32.
\newblock Macmillan New York, 1988.

\bibitem[Sapirshtein et~al.(2016)Sapirshtein, Sompolinsky, and
  Zohar]{sapirshtein2016optimal}
Ayelet Sapirshtein, Yonatan Sompolinsky, and Aviv Zohar.
\newblock Optimal selfish mining strategies in bitcoin.
\newblock In \emph{International Conference on Financial Cryptography and Data
  Security}, pages 515--532. Springer, 2016.

\bibitem[Shneidman and Parkes(2004)]{shneidman2004specification}
Jeffrey Shneidman and David~C Parkes.
\newblock Specification faithfulness in networks with rational nodes.
\newblock In \emph{Proceedings of the twenty-third annual ACM symposium on
  Principles of distributed computing}, pages 88--97, 2004.

\bibitem[Yao(2020)]{yao2018incentive}
Andrew~Chi{-}Chih Yao.
\newblock An incentive analysis of some bitcoin fee designs.
\newblock In \emph{47th International Colloquium on Automata, Languages, and
  Programming, {ICALP}}, pages 1:1--1:12, July 2020.

\end{thebibliography}

\newpage
\appendix

\section{Other Experimental Results}\label{sec:other-experimental-results}

\subsection{Under demand}\label{sec:under-demand}
Here we replicate the experiments from Section~\ref{sec:excess-demand} but with demand $n_t = 67$ smaller than the supply $m = 100$. When the demand is lower than the supply, the market-clearing price is $0$. In general, we cannot expect our mechanisms to converge to a posted price of $0$, but we can hope the mechanisms converge to a price that sells to all almost all bidders. We give the results in Figure~\ref{fig:constant-demand-2}.

Unfortunately, the WDPP mechanism continues to have high price volatility, which contributes to low welfare on average. Both the UDPP and the TWDPP mechanisms sell to almost all bidders, with the TWDPP having a slightly higher welfare because it converges to a lower average price.

\begin{figure}[H]
\centering
\includegraphics[width=\textwidth]{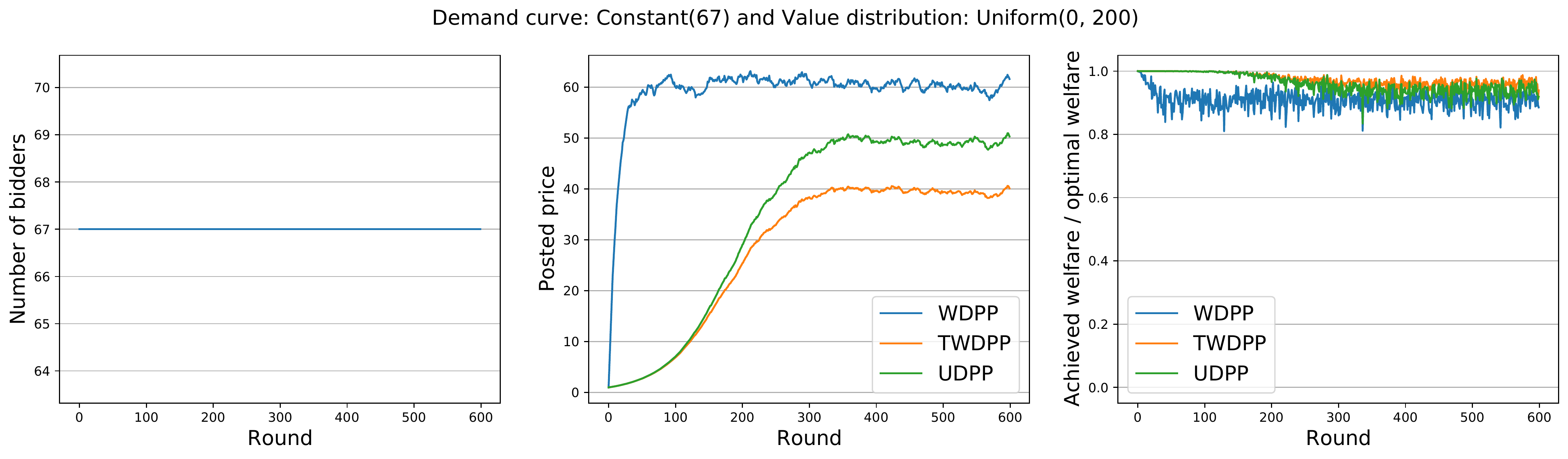}
\includegraphics[width=\textwidth]{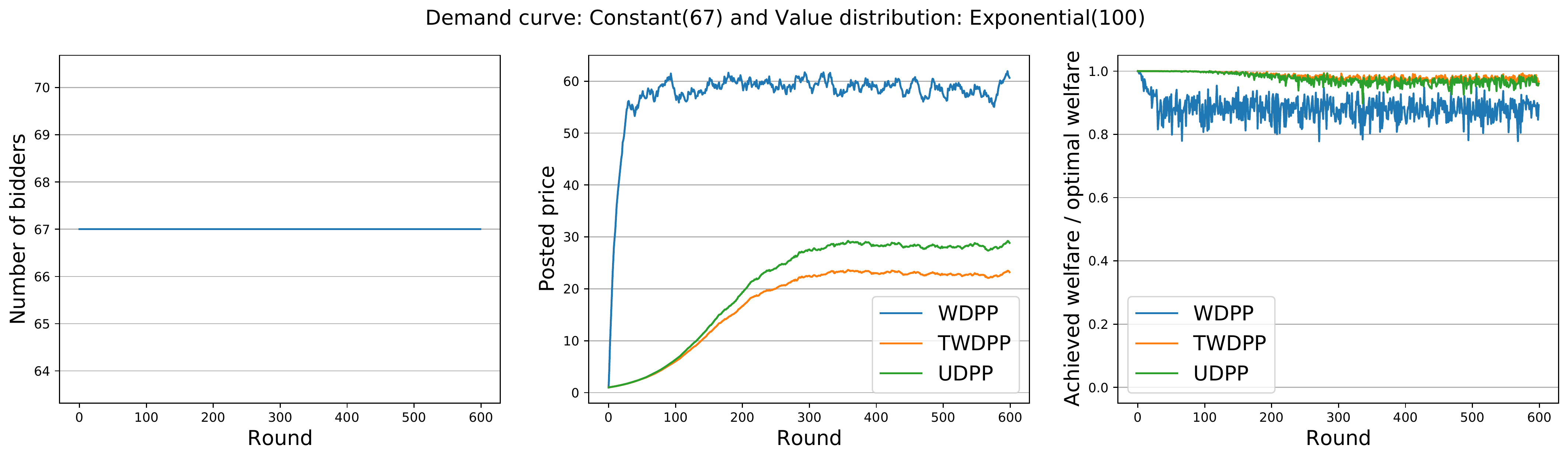}
\includegraphics[width=\textwidth]{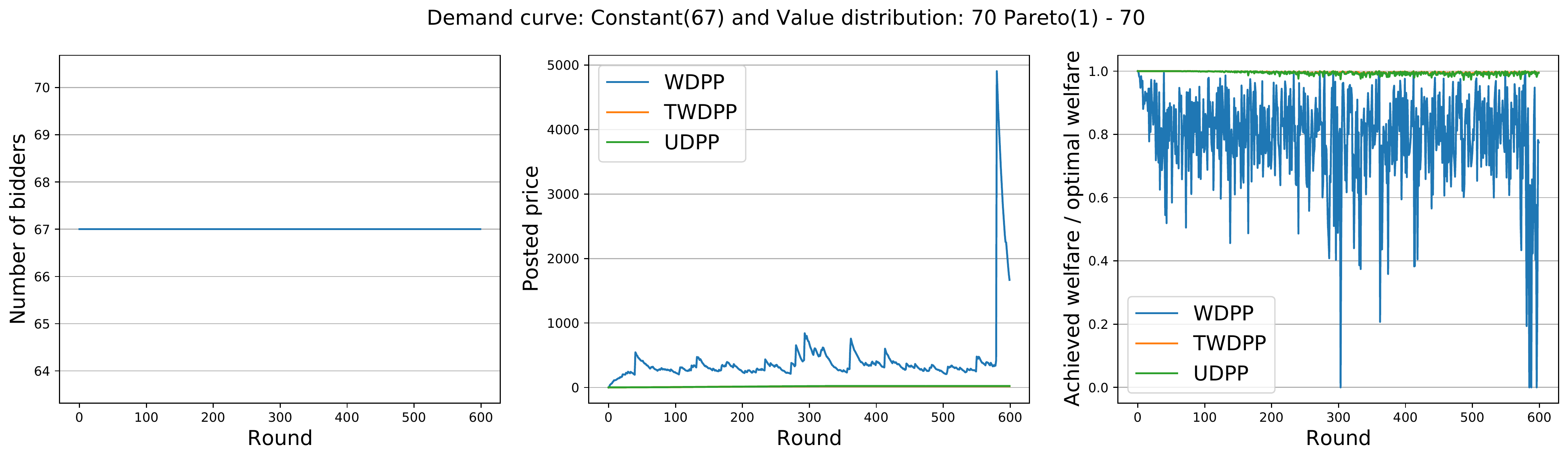}
\caption{Simulations with constant demand $n_t = 67$ for all $t$, constant supply $m = 100$, and uniform, exponential, and Pareto value distributions.}
\label{fig:constant-demand-2}
\end{figure}

\newpage
\subsection{Demand shocks}\label{sec:ex-shock}
We repeat the experiments from Section~\ref{sec:excess-demand} with a demand shock: the demand starts at $200$ transactions per time step, jumps to $600$, and later returns to $200$. The results are given in Figure~\ref{fig:additional-stepwise-demand}.

We observe that all mechanisms converge quickly to an equilibrium price under demand shock. During demand shock, the WDPP mechanism's instability increases for distributions with high variance (e.g., the Pareto distribution). The TWDPP mechanism has strictly higher welfare for all value distributions because it converges to an average price closer to the market-clearing price.

We observe that the TWDPP mechanism can have price volatility when values are drawn from the uniform distribution over $[0, 200]$. That happens because whenever a block is full, the TWDPP mechanism uses the utilization-based update rule. Thus when the TWDPP mechanism reaches a price $q$ close to 175, there is a high probability that there are at least $m$ bidders with values above $q$ causing the subsequent increase to $q(1+\alpha)$. However, this instability is unpredictable (differently from the instability we will observe in Section~\ref{sec:instability}) and even under instability, the TWDPP mechanism extracts higher welfare than the UDPP mechanism.

\begin{figure}[H]
\centering
\includegraphics[width=\textwidth]{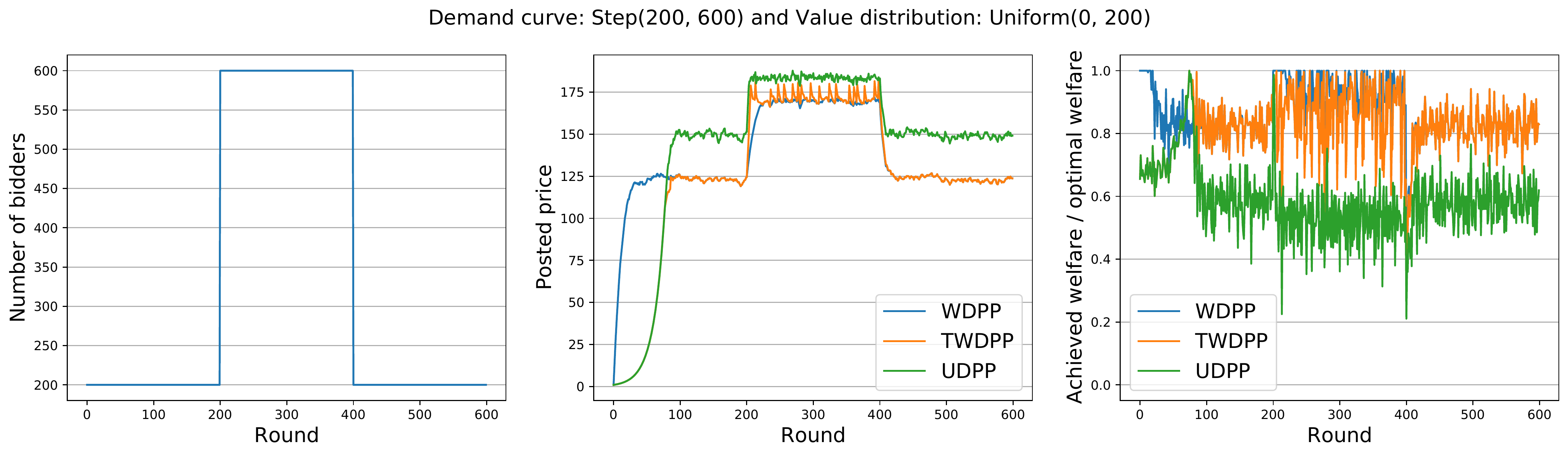}
\includegraphics[width=\textwidth]{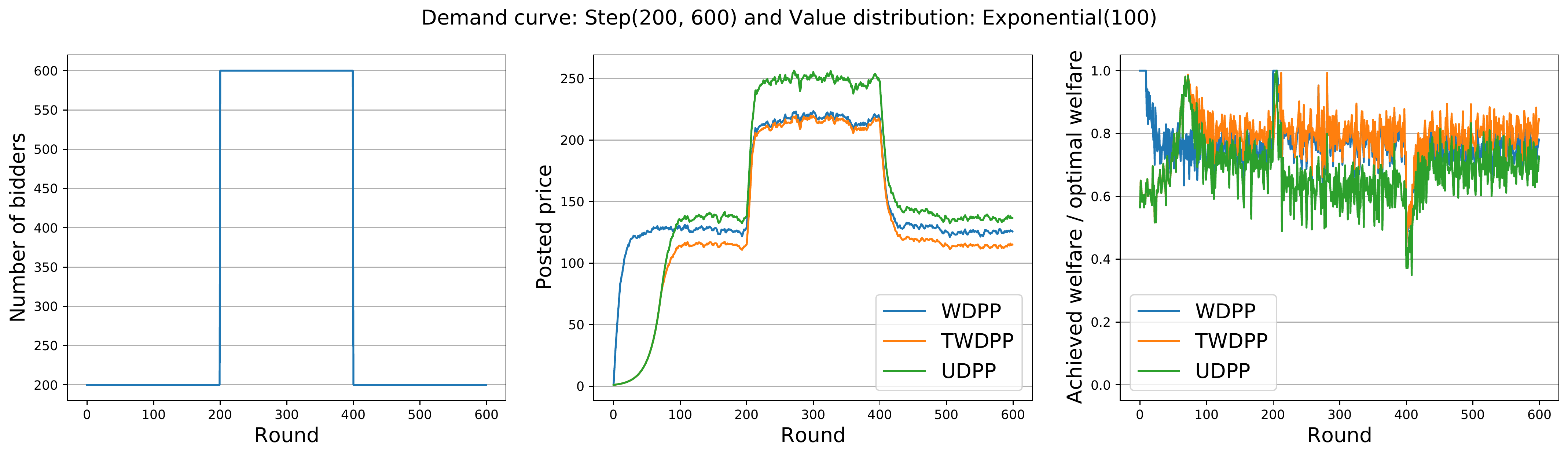}
\includegraphics[width=\textwidth]{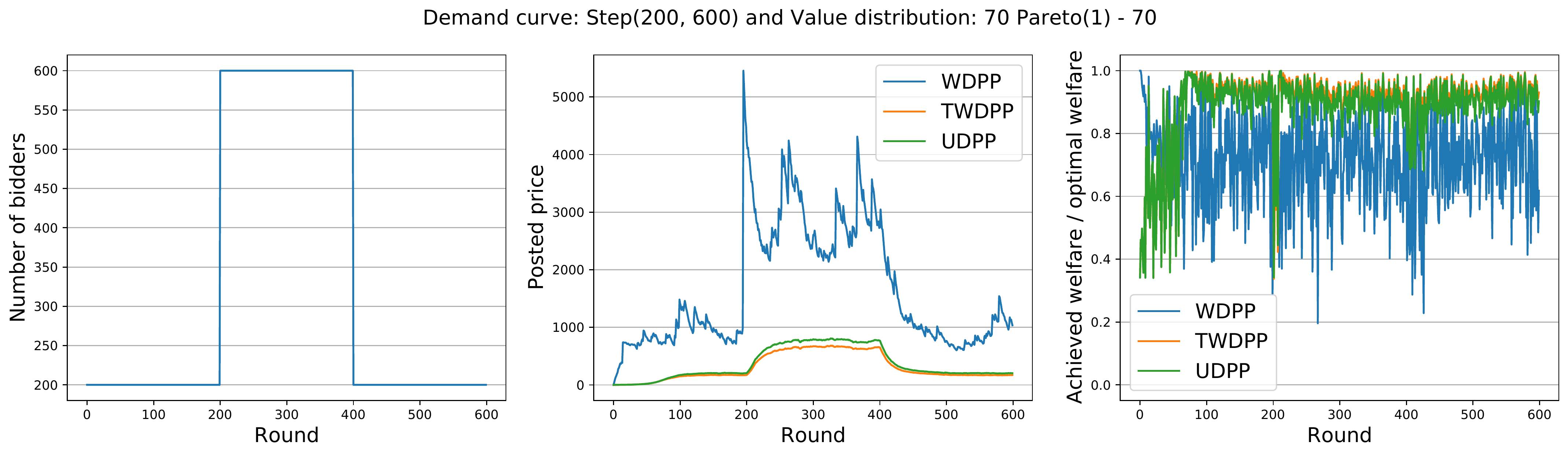}
\caption{The demand follows a step function that starts at $200$, jumps to $600$, and finally returns to $200$. As with the previous simulations, we use a fixed supply of $m = 100$ slots per time step, and we provide results for uniform, exponential, and Pareto value distributions.}
\label{fig:additional-stepwise-demand}
\end{figure}
\subsection{Instability of EIP-1559 utilization-based update rule}
\label{sec:instability}
This section considers the case where all bidders have a deterministic value of $100$. The experiment in Figure~\ref{fig:instability-1} serves as empirical support for Proposition~\ref{prop:instability}. It provides an example of a distribution where the WDPP mechanism is asymptotically stable, but the utilization-based update rule from EIP-1559 (and hence the UDPP mechanism) is unstable.

From the experiment, we observe the WDPP mechanism converges to a stable price of $100$ while the UDPP mechanism oscillates around $100$. The price oscillation weakens the bidder's myopic assumption. In particular, a patient bidder would wait to enter a bid only when the price is below 100. For this example, one can check the TWDPP mechanism is also unstable because the truncated welfare-based update rule reduces to the welfare-based update rule.

\begin{figure}[H]
\centering
\includegraphics[width=\textwidth]{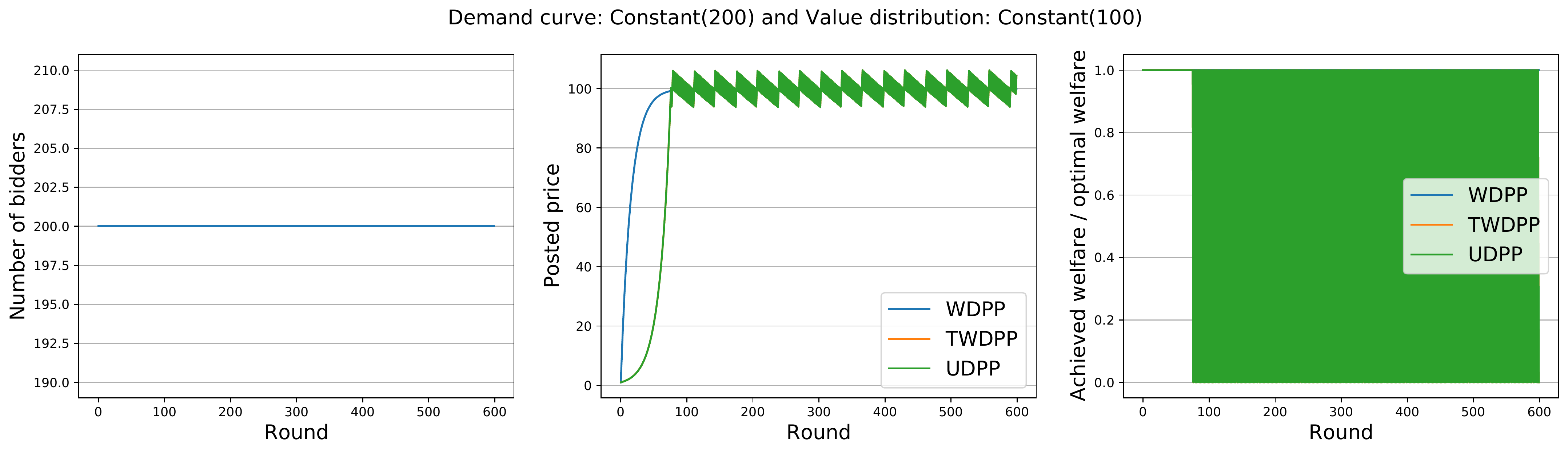}
\caption{Simulation with constant demand $n = 200$, constant supply $m = 100$, and deterministic bidder value $v = 100$.}
\label{fig:instability-1}
\end{figure}
Next, we modify the example in Figure~\ref{fig:instability-1} and consider the case where the demand is smaller than the supply. The experiment in Figure~\ref{fig:instability-2} serves as empirical support for Proposition~\ref{prop:twdpp-stable}. It provides an example of a distribution where the WDPP and TWDPP mechanisms are asymptotically stable, but the UDPP mechanism is unstable even when demand is lower than the supply.
\begin{figure}[H]
    \centering
    \includegraphics[width=\textwidth]{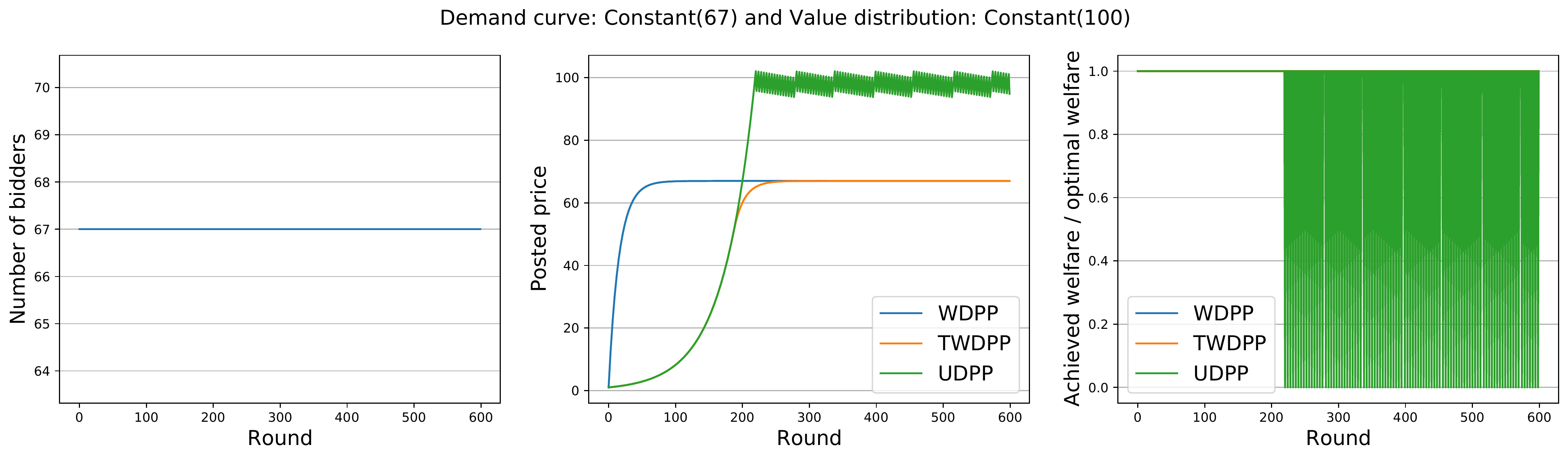}
    \caption{Simulation with constant demand $m = 67$, constant supply $m = 100$, and deterministic bidder value $v = 100$.}
    \label{fig:instability-2}
\end{figure}

\section{Existing Proposals of Blockchain Transaction Fee Mechanisms}\label{sec:summary-mechanisms}

\subsection{EIP-1559}\label{sec:eip-1559}
First-price auctions (Definition~\ref{def:first-price}) have been the standard choice for transaction fee mechanisms since \citet{nakamoto2008bitcoin} first proposed using them for Bitcoin. \citet{buterin2019eip1559}'s EIP-1559 proposal for Ethereum is closest to ours since their mechanism is endowed with an update rule for their posted price. However, their proposal contains a first-price auction component and a posted-price component, and here we study purely posted-price mechanisms.

\begin{definition}[EIP-1559~\cite{buterin2019eip1559}]\label{def:eip} %
The mechanism proposed in EIP-1559 is a dynamic mechanism $\left(\vec x, \vec p,\tu\right)$ endowed with the utilization-based update rule $\tu$ (Definition~\ref{def:utilization-based}) and a initial posted price $q_0$. Each bidder submits a tuple $(b_i, c_i)$ where $b_i$ is the bid and $c_i$ is a bid cap. Given history $B_1, B_2, \ldots, B_{t-1}$ and active bidders $M_t$, the mechanism computes the posted price $q_t = T(q_{t-1}, B_{t-1})$ for time step $t$. The intended allocation rule allocates slots to the top $m$ bids $b_i$ with $b_i + q_t \leq c_i$. For all bidders that receive a slot, the intended payment rule charges $b_i + q_t$, but miners are paid only $b_i$. Thus the utility of the active miner is
$$u_0^{\text{EIP-1559}}(q_t, B_t) := \sum_{i \in B_t} (p_i(B_1, \ldots B_t) - q_t).$$
\end{definition}
For EIP-1559, \citet{buterin2019eip1559} propose setting $\delta = 1$ in the hope the mechanism converges to an equilibrium that sells $m/2$ of the slots. Formally studying the stability of EIP-1559 is more challenging since truthful bidding is not an {\em ex post} Nash equilibrium, and any formal analysis would likely require analyzing bidders participating in a Bayesian Nash equilibrium. An alternative approach is to study EIP-1559 without the first-price auction component. The resulting mechanism is equivalent to the UDPP mechanism we study in Section~\ref{sec:udpp}.

\subsection{Monopolistic price auction}\label{sec:monopolistic}
We study mechanisms that sell a fixed supply of block slots. Nevertheless, since blockchain space is a digital good~\cite{goldberg2006competitive}, it is not obvious why the supply must be fixed a priori. With that in mind, \citet{lavi2019redesigning} propose using the \emph{monopolistic price auction} as a transaction fee mechanism, allowing miners to sell an unlimited supply of block slots. They showed the monopolistic price auction is DSIC for myopic miners and approximately IC for myopic bidders when values are drawn i.i.d. from a bounded distribution $F = [\underbar a, \bar a]$ and demand $n$ gets large -- i.e., $n \to \infty$.

\begin{definition}[Monopolistic price auction~\cite{lavi2019redesigning}]\label{def:monopolistic} The \emph{monopolistic price auction} is a static mechanism where the allocation rule allows an unlimited supply of slots to be allocated. First, the mechanism orders the active bids in decreasing order $b_1 \ge b_2 \ge \ldots \ge b_n$ and lets
$$k = \arg\max_i i \cdot b_i.$$
Then, it allocates slots to bidders $1$ through $k$. All bidders that receive a slot pay $b_k$ to the active miner.
\end{definition}

The monopolistic price is DSIC for myopic miners, and \citet{yao2018incentive} showed it is also approximately IC for myopic bidders for any distribution $F$ when the demand $n \to \infty$. Unfortunately, there is an inherent trade-off between having larger blocks and loss in social welfare.\footnote{Ethereum allows the blockchain protocol to converge to a fixed supply $m$ that is agreed upon by the majority of miners by allowing each active miner to upvote or downvote the supply by $\approx 0.1\%$. Thus one might expect $m$ to converge to a block size that balances the trade-offs between latency and supply.} Large block sizes result in higher network propagation delay, which increases the chance of forks. As a result, an unlimited supply of slots brings with it the possibility of new denial-of-service attacks.

\subsection{Random sampling optimal price}\label{sec:rsop}
\citet{goldberg2006competitive} proposed the \emph{random sampling optimal price} (RSOP) mechanism as a randomized truthful mechanism for selling an unlimited supply of items. The RSOP mechanism randomly partitions all bids into two groups and runs the monopolistic price auction in each. \citet{lavi2019redesigning} propose using RSOP as a transaction fee mechanism but recognize that it has undesirable properties. Crucially, RSOP has communication complexity $\Omega(n)$ since it requires all bids to be stored on the blockchain.

\subsection{Generalized second-price auction}\label{sec:gsp}
The \emph{generalized second-price} (GSP) auction~\cite{edelman2007internet} orders the bids in decreasing order $b_1 \ge b_2 \ge \ldots \ge b_n$ and allocates slots to the top $m$ bidders with each bidder $i \in [m]$ paying $b_{i+1}$ (assuming $n \geq m+1$). In the context of transaction fee mechanisms, \citet{basu2019towards} proposed a modified generalized second-price auction where bidder $i \in [m]$ pays $b_m$. Note their payment rule is identical to the monopolistic price auction but with a fixed supply. To incentivize miners to not include fake bids and inflate the $m$-th highest bid, the mechanism splits the revenue for block $B_t$ among the creators of future blocks $B_{t + 1}, B_{t+2}, \ldots, B_{t+k}$. As the demand and the number of miners increase, the mechanism is approximately incentive compatible for myopic bidders and myopic miners.

An out-of-model consideration pointed out by \citet{lavi2019redesigning} is collusion between an active miner with bidders where both agree to execute payments outside the blockchain. For example, a bidder bids $b_i = 0$ but agrees to pay the intended bid outside the blockchain.\footnote{For simplicity, we consider the case where the auction has no reserve price.} To motivate bidders to accept the collusion, the miner might divide the revenue with bidders. Moreover, even without colluding with bidders, a miner could create blocks with $m$ fake bids equal to $0$ and share the revenue of blocks $B_{t-k}, B_{t-k+1}, \ldots, B_{t-1}$ for free. The active miner loses $1/k$ of the revenue they could obtain by honestly creating block $B_t$. Thus including only fake transactions in blocks is an approximate equilibrium.

\section{Omitted Proofs from Section~\ref{sec:wdpp}}\label{sec:dpp-appendix}

\begin{proof}[Proof of Theorem~\ref{thm:wdpp}]
Recall $x_i(q)$ is the indicator random variable for bidder $i \in [n] = \{1, \ldots, n\}$ with value $v_i$ receiving a slot when the posted price is $q$. For fixed $\vec v$, $x_i(q)$ is a nonincreasing function of $q$ because the mechanism always allocates slots to the highest bidders. Assuming truthful bidding, the expected value of $\tw(q, B)$ is
\begin{equation}\label{eq:kernel-s}
E_{\tw}(q) := \alpha \underbrace{\frac{1}{m}\e{\sum_{i = 1}^n v_i \cdot x_i(q)}}_{\text{Kernel $\kw(q)$}} + (1-\alpha)q,
\end{equation}
where the expectation is taken over $\vec v$ drawn from $F^n$. From the proof outline in Section~\ref{sec:fixed-point}, we first check $E_{\tw}$ is a monotone mixture. Note that $\kw(q) = \welfare(\vec x(q))/m$ and that $\welfare(\vec x(q))$ is a nonincreasing function of $q$ since $x_i(q) = 1$ for the top $m$ bids with $v_i \geq q$. Thus $E_{\tw}$ is a monotone mixture. Next, we prove the kernel $\kw$ is Lipschitz continuous under the assumption $\rev$ is Lipschitz continuous.
\begin{claim}\label{claim:surplus-lipschitz}
If $\rev$ is $L$-Lipschitz, then kernel $\kw$ is $(L/m+1)$-Lipschitz.
\end{claim}
\begin{proof}
We decompose the welfare as the revenue plus the surplus. Since the revenue is $L$-Lipschitz, it suffices to show the surplus is $m$-Lipschitz. Fix any positive $q < q'$ and because the welfare is decreasing, we have $\kw(q) \geq \kw(q')$. Thus,
\begin{align*}
0 &\leq m \cdot (\kw(q) - \kw(q'))\\
&= \underbrace{\e{\sum_{i = 1}^n (v_i - q) \cdot x_i(q)}}_{\text{Expected surplus at price $q$}} - \underbrace{\e{\sum_{i = 1}^n (v_i - q') \cdot x_i(q')}}_{\text{Expected surplus at price $q'$}} + \rev(q) - \rev(q')\\
&= \e{\sum_{i = 1}^n (v_i - q) \cdot (x_i(q) - x_i(q')) - (v_i - q' - v_i + q) \cdot x_i(q')} +  \rev(q) - \rev(q').
\end{align*}
Observe $x_i(q') \leq x_i(q)$, and whenever $x_i(q) - x_i(q') = 1$, bidder $i$ has value $v_i \leq q'$. Thus the difference between the expected surplus at price $q$ and $q'$ is at most
\begin{align*}
(q' - q)\e{\sum_{i = 1}^n x_i(q) - x_i(q')} + (q' - q)\e{\sum_{i = 1}^n x_i(q')} = (q'-q)\e{\sum_{i = 1}^n x_i(q)} \leq (q'-q)\cdot m. 
\end{align*}
From the assumption $\rev$ is $L$-Lipschitz, we have
$$0 \leq \kw(q) - \kw(q') \leq (q' - q)(L/m + 1).$$
This proves $\kw$ is $(L/m + 1)$-Lipschitz.
\end{proof}
From Lemma~\ref{lemma:contractive-update-rule} with $\alpha \leq 1/(L/m + 1)$, we get that $E_{\tw}(q)$ is a contractive mapping with Lipschitz constant $1-\alpha < 1$. From the Banach Contraction Principle (Lemma~\ref{lemma:banach-contraction}), for any initial positive price $q_0$, the iteration $q_0, E_{\tw}(q_0), E_{\tw}^2(q_0), \ldots$ converges to a unique fixed point $q^*$. From Observation~\ref{obs:equal-fixed point}, $q^*$ is also a unique fixed point for the kernel $\kw$. This proves the intended mechanism is asymptotically stable with respect to distribution $F^n$.
\begin{claim}\label{claim:wdpp-welfare}
If $q$ is a fixed point for $\kw$, then $\welfare(\vec x(q)) \geq \frac{1}{2}\mathit{OPT}$.
\end{claim}
\begin{proof}
Observe $q \cdot m$ is the expected welfare of the WDPP mechanism at equilibrium price $q$ since $q$ is a fixed point for $\kw$. That is,
$$\welfare(\vec x(q)) =  \e{\sum_{i = 1}^n v_i \cdot x_i(q)} = m \cdot \kw(q) = q \cdot m,$$
where the second equality is simply the definition of $\kw$ and the third equality follows from the fact $q = \kw(q)$ is a fixed point of $\kw$. Assume for contradiction $q \cdot m \leq \frac{1}{2} \mathit{OPT}$, then the optimal welfare
\begin{align*}
\mathit{OPT} &= \e{\sum_{i = 1}^n v_i \cdot x_i^*(\vec v)} = \e{\sum_{i = 1}^n v_i(x_i^*(\vec v) - x_i(q))} + \e{\sum_{i = 1}^n v_i \cdot x_i(q)}\\
&<  q \e{\sum_{i = 1}^n (x_i^*(\vec v) - x_i(q))} + \welfare(\vec x(q)) \quad \text{Because $x_i^*(\vec v) - x_i(q) = 1$ only if $v_i < q$.}\\
& \leq \frac{\mathit{OPT}}{2} + \welfare(\vec x(q))\quad \text{Because $q \leq \frac{\mathit{OPT}}{2m}$ and $\sum_{i = 1}^n x_i^*(\vec v) - x_i(q) \leq m$.}\\
&= \frac{\mathit{OPT}}{2}  + \welfare(\vec x(q)).
\end{align*}
Rearranging the inequality gives $q \cdot m = \welfare(\vec x(q)) > \frac{\mathit{OPT}}{2}$, a contradiction. Thus $\welfare(\vec x(q)) = q \cdot m \geq \frac{1}{2}\mathit{OPT}$.
\end{proof}
This proves Theorem~\ref{thm:wdpp}.
\end{proof}

\section{Omitted Proofs from Section~\ref{sec:twdpp}}

\begin{proof}[Proof of Proposition~\ref{prop:twdpp-stable}]
For any $\delta$, $\alpha$, positive value $v$, and supply $m \geq 4(1+\delta)/\delta$, assume that at each time step there are $n = \lfloor m \frac{1+\delta/2}{1+\delta}\rfloor \leq m - 1$ bidders with value $v$. For the UDPP mechanism $\left(\vec x, \vec p, \tu\right)$, if the current price $q_t \leq v$, the subsequent price increases to
$$q_{t+1} = q_t\left(1 + \alpha \left(\frac{n}{m}(1+\delta)-1\right)\right) \geq q_t\left(1+\alpha\left(1+\frac{\delta}{2}-\frac{1+\delta}{m}-1\right)\right) \geq q_t\left(1+\frac{\alpha\delta}{4}\right).$$
If the current price $q_t > v$, no bidder purchases, and the subsequent price decreases to
$$q_{t+1} = q_t(1-\alpha).$$
This repeats {\em ad infinitum}, and the allocation-based update rule is unstable with respect to distribution $F^n$.

For the TWDPP mechanism $\left(\vec x, \vec p, \ttw\right)$, if the current price $q_t > v$, no bidder purchases, and the subsequent price decreases to
$$q_{t+1} = q_t(1-\alpha).$$
If the current price $q_t < \frac{v}{1+\delta}$, all $n$ bidders purchase, and the subsequent price increases to
$$q_{t+1} = \alpha \frac{n}{m}\min\{v, (1+\delta)q_t\} + (1-\alpha)q_t \leq q_t(\alpha(1+\delta)+ 1- \alpha) = q_t(1+\alpha\delta) < v.$$
Thus eventually, the mechanism reaches a price $q_t \in \left[\frac{v}{1+\delta}, v\right] = X$. If the current price $q_t \in X$, all $n$ bidders purchase, and the subsequent price is
$$q_{t+1} = \ttw(q_t) = \alpha \frac{n}{m}\min\{v, (1+\delta)q_t\} + (1-\alpha)q_t = \alpha \frac{n}{m} v + (1-\alpha)q_t \leq v.$$
Additionally,
\begin{align*}
\ttw(q_t) &= \alpha \frac{n}{m} v + (1-\alpha)q_t\\
&\geq \alpha v\left(\frac{1+\delta/2}{1+\delta}-\frac{1}{m}\right)+(1-\alpha)\frac{v}{1+\delta} &  \text{Because $n\geq m\frac{1+\delta/2}{1+\delta}-1$},\\
&= \alpha v\left(\frac{1}{1+\delta} +\frac{\delta/2}{1+\delta}-\frac{1}{m}\right)+(1-\alpha)\frac{v}{1+\delta}\\
& \geq \alpha v\left(\frac{1}{1+\delta} + \frac{\delta}{4(1+\delta)}\right) + (1-\alpha)\frac{v}{1+\delta} &\text{Because $m \geq 4(1+\delta)/\delta$},\\
&\geq \frac{v}{1+\delta}. 
\end{align*}
Thus $\ttw$ maps $X$ to itself. For all $q, q' \in X$, we have
$$|\ttw(q) - \ttw(q')| = (1-\alpha)|q - q'|.$$
Thus the restriction of $\ttw$ to $X$ is a contractive mapping (Definition~\ref{def:contractive-mapping}). From the Banach Contraction Principle (Lemma~\ref{lemma:banach-contraction}), for all $q_t \in X$, the fixed point iteration
$$q_t, \ttw(q_t), \ttw(\ttw(q_t)), \ldots$$
converges to a unique fixed point $z \in X$. This proves the TWDPP mechanism is asymptotically stable with respect to distribution $F^n$. See Figure~\ref{fig:instability-2} for an empirical example of this behavior.
\end{proof}

\subsection{Omitted proofs from Section~\ref{sec:twdpp-stability}}\label{sec:twdpp-stability-appendix}

The following is a well-known fact about concave functions.
\begin{proposition}\label{prop:concave} If $f$ and $g$ are (strictly) concave functions, then:
\begin{enumerate}[label=(\roman*)]
    \item \label{prop-concave-1} $f + g$ is (strictly) concave.
\end{enumerate}
\end{proposition}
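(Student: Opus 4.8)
The plan is to argue directly from the definition of concavity. Recall that a function $h$ on a convex domain is concave if for all points $x, y$ in the domain and every $\lambda \in [0,1]$ we have $h(\lambda x + (1-\lambda)y) \ge \lambda h(x) + (1-\lambda) h(y)$, with the inequality strict whenever $x \ne y$ and $\lambda \in (0,1)$ in the strictly concave case. First I would fix arbitrary $x, y$ in the common domain and an arbitrary $\lambda \in [0,1]$, and write out the target inequality for the sum $f+g$ evaluated at the convex combination $\lambda x + (1-\lambda)y$, using only that $(f+g)(z) = f(z) + g(z)$ pointwise.

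The key step is to apply the concavity hypothesis to $f$ and to $g$ separately, obtaining two inequalities, and then add them termwise. Adding
\[
f(\lambda x + (1-\lambda)y) \ge \lambda f(x) + (1-\lambda) f(y)
\]
and the analogous inequality for $g$ yields exactly the concavity inequality for $f+g$, since the left-hand sides sum to $(f+g)(\lambda x + (1-\lambda)y)$ and the right-hand sides sum to $\lambda (f+g)(x) + (1-\lambda)(f+g)(y)$. This settles the non-strict case immediately.

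For the strict case, the only point requiring care is the bookkeeping of when strictness is guaranteed. If both $f$ and $g$ are strictly concave, then for $x \ne y$ and $\lambda \in (0,1)$ both summed inequalities are strict, so their sum is strict as well; more generally, adding a strict inequality to any valid inequality preserves strictness, so it would even suffice for one summand to be strictly concave and the other merely concave. I do not anticipate a genuine obstacle: the argument is essentially a one-line addition of inequalities, and the main thing to get right is stating the definition precisely enough that the strict and non-strict cases are dispatched by the same computation.
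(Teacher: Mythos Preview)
Your argument is correct and is exactly the standard direct verification from the definition: apply the (strict) concavity inequality to $f$ and to $g$ separately and add. Note, however, that the paper does not actually supply a proof of this proposition; it is merely stated as a well-known fact, so there is no ``paper's proof'' to compare against. Your write-up would serve perfectly well as the omitted justification.
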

We will require the following fact:
\begin{lemma}[Theorem 3.1 in \cite{kennan2001uniqueness}]\label{lemma:concave-uniqueness}
If $f$ is strictly concave and $f(0) \geq 0$, then $f$ has at most one positive fixed point.
\end{lemma}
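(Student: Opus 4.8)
The plan is to argue by contradiction: assume $f$ has two distinct positive fixed points and use strict concavity together with the hypothesis $f(0) \ge 0$ to force a strict inequality that cannot hold. The convenient device is to pass to the auxiliary function $g(x) := f(x) - x$, whose zeros are exactly the fixed points of $f$. Since the map $x \mapsto -x$ is affine (hence concave), $g$ is strictly concave — here I would either appeal to Proposition~\ref{prop:concave}\ref{prop-concave-1} in the slightly strengthened form ``strictly concave plus concave is strictly concave,'' or simply unpack the definition of strict concavity directly for $g$, which sidesteps relying on that exact wording. Note also that $g(0) = f(0) \ge 0$.

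Now suppose, for contradiction, that there are two distinct positive fixed points, which I label $0 < x_1 < x_2$, so that $g(x_1) = g(x_2) = 0$. The key step is to observe that the middle value $x_1$ lies strictly between the endpoints $0$ and $x_2$, so I can write it as a genuine convex combination $x_1 = (1-t)\cdot 0 + t\, x_2$ with $t = x_1/x_2 \in (0,1)$. Strict concavity of $g$ then gives
\[
g(x_1) \;>\; (1-t)\,g(0) + t\,g(x_2) \;=\; (1-t)\,g(0) \;\ge\; 0,
\]
where the middle equality uses $g(x_2)=0$ and the final inequality uses $g(0)\ge 0$ together with $1-t>0$. This contradicts $g(x_1)=0$, so at most one positive fixed point can exist.

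I do not expect a genuine obstacle here; the argument is short, and the only points requiring care are bookkeeping ones. First, I must feed the \emph{interior} point of the three into the left-hand side of the concavity inequality — that is, $0$ and $x_2$ are the endpoints and $x_1$ is the interpolated point, not the other way around — since swapping these roles would produce a useless inequality. Second, the sign hypothesis $f(0)\ge 0$ is exactly what converts the convex-combination bound into a nonnegative right-hand side; without it one could only conclude ``at most two'' fixed points rather than ``at most one positive'' fixed point. If I preferred to avoid introducing $g$ at all, the same computation runs verbatim on $f$ itself, using $f(x_1)=x_1$, $f(x_2)=x_2$, and $f(0)\ge 0$.
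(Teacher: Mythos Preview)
Your argument is correct. The paper does not supply its own proof of this lemma; it simply cites it as Theorem~3.1 of \cite{kennan2001uniqueness}, so there is nothing to compare against directly. That said, the paper deploys exactly your strict-concavity-at-$0$ maneuver in the $\Leftarrow$ direction of Claim~\ref{claim:concave-1} (showing $f(\alpha z) > \alpha z$ for a positive fixed point $z$) and again in Claim~\ref{claim:A-L}, so your approach is entirely in line with the paper's own toolkit.
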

Lemma~\ref{lemma:concave-uniqueness} guarantees a concave function $f$ satisfying Assumption~\ref{assumption:concave-2} has at most one positive fixed point, but it does not guarantee existence. For the one-dimensional case, uniqueness will follow directly from the intermediate value theorem.
\begin{proof}[Proof of Lemma~\ref{lemma:concave-contraction}]
First, we show $f : X \to X$ with $X = [0, \bar a]$ has at least one nonnegative fixed point.
\begin{claim}\label{claim:concave-1}
There is a positive $a \in X$ such that $f(a) > a$ if and only if $f$ has a unique positive fixed point $x^*$.
\end{claim}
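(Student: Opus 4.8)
The plan is to study the auxiliary function $g(x) = f(x) - x$ on $X = [0,\bar a]$. Since $f$ is strictly concave and the linear map $x \mapsto -x$ is concave, $g$ is strictly concave (adding a linear function preserves strict concavity); it is also continuous on $X$. Because $f$ maps $X$ into itself, we have $g(0) = f(0) \geq 0$ and $g(\bar a) = f(\bar a) - \bar a \leq 0$. A positive fixed point of $f$ is exactly a positive zero of $g$, so the claim reduces to a statement about the zeros of the strictly concave function $g$.

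For the direction ($\Leftarrow$), I would assume $f$ has a positive fixed point $x^*$, so $g(x^*) = 0$ with $x^* \in (0,\bar a]$. Picking any $a \in (0, x^*)$ and writing $a = (1-t)\cdot 0 + t\,x^*$ with $t = a/x^* \in (0,1)$, strict concavity of $g$ gives $g(a) > (1-t)\,g(0) + t\,g(x^*) = (1-t)\,g(0) \geq 0$, i.e.\ $f(a) > a$, and $a$ is a positive point of $X$. The point worth flagging is that this single computation also covers the case $f(0) = 0$ (where $0$ is itself a fixed point) with no separate argument: strict concavity upgrades the weak inequality $g(0) \geq 0$ to a strict inequality at every interior point of $[0, x^*]$.

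For the direction ($\Rightarrow$), I would assume some positive $a \in X$ satisfies $f(a) > a$, i.e.\ $g(a) > 0$. Since $g(\bar a) \leq 0$ we must have $a \neq \bar a$, hence $a \in (0,\bar a)$ and $g$ changes sign on $[a,\bar a]$. The intermediate value theorem, applied to the continuous function $g$, then produces a point $x^* \in (a,\bar a]$ with $g(x^*) = 0$; as $x^* > a > 0$, this is a positive fixed point of $f$. Its uniqueness is immediate from Lemma~\ref{lemma:concave-uniqueness}, since $f$ is strictly concave with $f(0) \geq 0$.

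I do not anticipate a serious obstacle: both directions reduce to elementary facts about the strictly concave function $g$. The only two points requiring care are the uniform treatment of the case $f(0) = 0$ in the ($\Leftarrow$) direction, and the step ruling out $a = \bar a$ in the ($\Rightarrow$) direction, which relies precisely on $f$ mapping $X$ into itself so that $g(\bar a) \leq 0$.
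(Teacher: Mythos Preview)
Your proposal is correct and follows essentially the same route as the paper: both directions hinge on the auxiliary function $g(x)=f(x)-x$, with the intermediate value theorem plus Lemma~\ref{lemma:concave-uniqueness} for ($\Rightarrow$) and a one-line strict-concavity inequality for ($\Leftarrow$). The only cosmetic difference is that for ($\Rightarrow$) the paper supplies the second endpoint of the IVT via assumption~\ref{concave-4} (a point $b$ with $f(b)<b$), whereas you use $f(\bar a)\le\bar a$ from $f:X\to X$; and for ($\Leftarrow$) the paper applies strict concavity to $f$ directly while you apply it to $g$, which amounts to the same computation.
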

\begin{proof}
We first prove that if there is a positive $a \in X$ such that $f(a) > a$, then $f$ has a unique positive fixed point.

\vspace{1mm}\noindent\textit{$\Rightarrow$} From~\ref{concave-4}, there is a positive $b \in X$ such that $f(b) < b$. Let
\begin{equation}\label{eq:concave-g}
g(x) = f(x) - x,
\end{equation}
and observe $g$ is continuous (because $f$ is continuous) and $x$ is a fixed point for $f$ if and only if $g(x) = 0$. Since $g$ is continuous, $g(a) > 0$, and $g(b) < 0$, we can use the intermediate value theorem to conclude there is an $x^* \in [\min\{a, b\}, \max\{a, b\}] \subseteq X$ such that $g(x^*) = 0$. Thus $x^* > 0$ is a positive fixed point for $f$. From Lemma~\ref{lemma:concave-uniqueness}, $x^*$ is the unique positive fixed point of $f$.

Next, we prove that if $f$ has a unique positive fixed point, then there is a positive $a \in X$ such that $f(a) > a$. 

\vspace{1mm}\noindent\textit{$\Leftarrow$} Let $z \in X$ be a positive fixed point of $f$. For some $\alpha \in (0, 1)$, let $a = \alpha \cdot z > 0$ and note $a \in X$. From the strict concavity of $f$,
\begin{align*}
f(\alpha z) &= f(\alpha z + (1-\alpha) \cdot 0)\\
&> \alpha f(z) + (1-\alpha)f(0) \quad &\text{From strict concavity of $f$,}\\
&\geq \alpha f(z) \quad &\text{From \ref{concave-1}, $f(0) \geq 0$,}\\
&= \alpha z \quad &\text{Because $z = f(z)$ is a fixed point of $f$.}
\end{align*} 
The chain of inequalities witnesses $f(a) = f(\alpha z) > \alpha \cdot z = a > 0$. Thus there is a positive $a \in X$ such that $f(a) > a$.
\end{proof}
\begin{claim}\label{claim:concave-2-existance}
$f$ has at least one nonnegative fixed point and at most one positive fixed point.
\end{claim}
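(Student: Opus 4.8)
The plan is to dispatch the two halves of the claim separately. The upper bound ``at most one positive fixed point'' is immediate: since $f$ is strictly concave and $f(0) \geq 0$ by~\ref{concave-1}, Lemma~\ref{lemma:concave-uniqueness} applies verbatim and rules out two distinct positive fixed points. So the real content of the claim is existence of at least one nonnegative fixed point, and that is where I would concentrate the argument.

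For existence I would reuse the auxiliary function $g(x) = f(x) - x$ from~\eqref{eq:concave-g}, which is continuous on $X = [0, \bar a]$ because $f$ is, and whose zeros are exactly the fixed points of $f$. The idea is to pin down the sign of $g$ at two points and invoke the intermediate value theorem, in the same spirit as Claim~\ref{claim:concave-1}. At the left endpoint, \ref{concave-1} gives $g(0) = f(0) \geq 0$. For a point where $g$ is nonpositive I would take the positive $b \in X$ supplied by~\ref{concave-4}, for which $g(b) = f(b) - b < 0$; alternatively the right endpoint serves just as well, since $f(\bar a) \in X$ forces $f(\bar a) \leq \bar a$ and hence $g(\bar a) \leq 0$.

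With $g(0) \geq 0 \geq g(b)$ and $g$ continuous, the intermediate value theorem then produces an $x^* \in [0, b] \subseteq X$ with $g(x^*) = 0$, i.e.\ a nonnegative fixed point of $f$. Combined with the uniqueness statement from Lemma~\ref{lemma:concave-uniqueness}, this settles both assertions of the claim at once.

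The only point that needs genuine care — and the closest thing to an obstacle — is the boundary behavior of this intermediate value argument. Because $g(0) \geq 0$ is a weak inequality, the fixed point it produces may sit at the endpoint $0$ itself (precisely when $f(0) = 0$) rather than in the interior, which is exactly why the conclusion must be phrased as a \emph{nonnegative} rather than strictly positive fixed point. For the same reason one cannot simply invoke Claim~\ref{claim:concave-1} to get existence, since that claim presupposes the strict overshoot $f(a) > a$ at some positive $a$, a hypothesis not assumed here. Verifying that the weak endpoint inequalities have the right signs, and that $f(0) \geq 0$ (not $f(0) > 0$) is all that is available, is the subtlety to watch.
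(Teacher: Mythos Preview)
Your argument is correct but takes a different route from the paper. The paper proceeds by a case split: if $f$ has no positive fixed point, it invokes the contrapositive of Claim~\ref{claim:concave-1} to obtain $0 \leq f(x) \leq x$ for every positive $x$, squeezes the sequence $f(\bar a/n)$ to $0$, and uses continuity to conclude $f(0)=0$; if $f$ does have a positive fixed point, Lemma~\ref{lemma:concave-uniqueness} supplies uniqueness. You instead apply the intermediate value theorem once to $g(x)=f(x)-x$ on $[0,b]$, using $g(0)\geq 0$ from~\ref{concave-1} and $g(b)<0$ from~\ref{concave-4}, and handle the weak endpoint inequality by noting that $g(0)=0$ already gives a fixed point. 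This is more direct and avoids any dependence on Claim~\ref{claim:concave-1}. The paper's detour buys one small thing: it makes explicit that in the absence of a positive fixed point the nonnegative fixed point is exactly $0$, a fact invoked later in Claim~\ref{claim:concave-2-case-1}. Your approach yields the same conclusion with no extra work, since if the IVT produces $x^*\in[0,b]$ and no positive fixed point exists, then necessarily $x^*=0$.
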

\begin{proof}
Let's first consider the case where $f$ has no positive fixed point. From Claim~\ref{claim:concave-1}, if $f$ has no positive fixed point, then $0 \leq f(x) \leq x$ for all positive $x \in X$. Then the sequence
$$f\left(\bar a\right), f\left(\frac{\bar a}{2}\right), f\left(\frac{\bar a}{3}\right), \ldots$$
converges to $0$. By continuity of $f$, $0 = f(0)$ is a fixed point of $f$ and $0$ is a nonnegative fixed point of $f$ whenever $f$ has no positive fixed point. Second, consider the case where $f$ has a positive fixed point. From Lemma~\ref{lemma:concave-uniqueness}, this fixed point is unique.
\end{proof}
Let $x^*$ be the largest nonnegative fixed point of $f$. Define the partition of $X$ into sets
\begin{equation}\label{eq:A}
A := \{x \in X : f(x) \geq x\},
\end{equation}
\begin{equation}\label{eq:L}
L := \{x \in X : f(x) < x\}.
\end{equation}
\begin{claim}\label{claim:A-L} $A = [\inf X, x^*]$ and $L = (x^*, \sup X]$.
\end{claim}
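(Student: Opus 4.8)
The plan is to prove $A = [0, x^*]$; since $A$ and $L$ partition $X$ by their defining conditions~\eqref{eq:A}--\eqref{eq:L}, and since $\inf X = 0$ and $\sup X = \bar a$, the description $L = (x^*, \bar a]$ follows at once as $L = X \setminus A$. Throughout I would work with the function $g(x) = f(x) - x$ from~\eqref{eq:concave-g}, which is continuous and, being the sum of the concave functions $f$ and $x \mapsto -x$, is concave by Proposition~\ref{prop:concave}\ref{prop-concave-1}. Note that $x \in A$ exactly when $g(x) \geq 0$.

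First I would show that $A$ is a closed interval of the form $[0, c]$. Its two anchor points are $0$, which lies in $A$ because $g(0) = f(0) \geq 0$ by~\ref{concave-1}, and $x^*$, which lies in $A$ because $g(x^*) = f(x^*) - x^* = 0$ as $x^*$ is a fixed point. Since $g$ is concave, its superlevel set $\{x \in X : g(x) \geq 0\} = A$ is convex, hence an interval; since $g$ is continuous and $X$ is compact, $A$ is closed and attains a maximum $c := \max A$. As $A$ is an interval contained in $[0,\bar a]$ with $0 \in A$, its least element is $0$, so $A = [0, c]$ with $c \geq x^*$.

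It then remains to prove $c = x^*$. Assumption~\ref{concave-4} furnishes a positive $b \in X$ with $f(b) < b$, i.e.\ $g(b) < 0$, so $b \notin A$ and therefore $c < b \leq \bar a$; this rules out the degenerate possibility $A = X$. Continuity of $g$ together with maximality of $c$ then forces $g(c) = 0$: indeed $g(c) \geq 0$ since $c \in A$, and if $g(c) > 0$ then $g$ would remain positive on a neighborhood of $c$, and because $c < \bar a$ this neighborhood contains points larger than $c$, placing them in $A$ and contradicting $c = \max A$. Hence $c$ is a nonnegative fixed point of $f$, so by maximality of $x^*$ we have $c \leq x^*$; combined with $c \geq x^*$ this gives $c = x^*$. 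Therefore $A = [0, x^*] = [\inf X, x^*]$ and $L = X \setminus A = (x^*, \bar a] = (x^*, \sup X]$.

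The step I expect to be the main obstacle is pinning down the right endpoint $c = x^*$: one must invoke~\ref{concave-4} to exclude $c = \bar a$, and then run the continuity-plus-maximality argument carefully to conclude $g(c) = 0$ rather than merely $g(c) \geq 0$. Everything else---convexity of the superlevel set and the two anchor points---is routine. An alternative to the endpoint argument would exploit strict concavity directly: if some $x > x^*$ had $g(x) \geq 0$, then with $x^*$ strictly between $0$ and $x$ strict concavity gives $g(x^*) > \lambda g(0) + (1-\lambda) g(x) \geq 0$, contradicting $g(x^*) = 0$. This route, however, requires $x^* > 0$ and so would need the case $x^* = 0$ handled separately via Claim~\ref{claim:concave-1}, which is why I prefer the interval argument that treats both cases uniformly.
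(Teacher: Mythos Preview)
Your proof is correct and takes a different route from the paper's. The paper argues directly via strict concavity: assuming $x \in A$, $y \in L$ with $x > y$, it writes $y = (1-\alpha)x + \alpha \cdot 0$ and derives $f(y) > (1-\alpha)f(x) + \alpha f(0) \geq (1-\alpha)x = y$, a contradiction; this shows every element of $L$ exceeds every element of $A$. You instead pass to $g = f - \mathrm{id}$, use concavity to conclude the superlevel set $A = \{g \geq 0\}$ is a closed interval $[0,c]$, and then pin down $c = x^*$ by a continuity--maximality argument together with~\ref{concave-4}. Your approach has the merit of explicitly identifying the right endpoint: the paper's argument, as written, establishes only the ordering $\sup A \leq \inf L$ and leaves implicit the step that the boundary is exactly $x^*$ (one still needs something like your endpoint argument, or an IVT application using~\ref{concave-4}, to rule out $\sup A > x^*$). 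Conversely, the paper's one-line inequality is essentially the ``alternative'' you sketch at the end; your worry that this alternative requires $x^* > 0$ is sidestepped in the paper by taking an arbitrary $y \in L$ (automatically positive since $f(0)\geq 0$) rather than $y = x^*$.
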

\begin{proof}
Observe  $x^* \in A$ because $f(x^*) = x^*$. Suppose for contradiction there are $x \in A$ and $y \in L$ such that $x > y$. Because $0 < y < x$, there is an $\alpha \in (0, 1)$ such that $y = (1-\alpha)x$. From the strict concavity of $f$,
\begin{align*}
    f(y) &= f(0\cdot \alpha + x(1-\alpha)) &\\
    &> \alpha f(0) + (1-\alpha)f(x) & \text{From strict concavity of $f$},\\
    &\geq (1-\alpha)x = y & \text{Because $f(0) \geq 0$ and $f(x) \geq x$ since $0, x \in A$}.
\end{align*}
The chain of inequalities witnesses $y \in A$, a contradiction to $y \in L = X \setminus A$.
\end{proof}
We are ready to show that iterating $f$ starting from a positive $x_0$ converges to $x^*$. Define $x_t = f(x_{t-1})$ for $t = 1, 2, \ldots$. It suffices to show that the sequence $x_0, x_1, \ldots$ converges to a limit point $z$ since the continuity of $f$ implies $z$ is a fixed point of $f$. When $z$ is positive and $f$ has a unique positive fixed point, we will have $z = x^*$. First, let us consider the case where $f$ has no positive fixed point.

\begin{claim}\label{claim:concave-2-case-1}
If $f$ has no positive fixed point, the sequence $x_0, x_1, \ldots$ converges to $0$.
\end{claim}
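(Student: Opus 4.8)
The plan is a standard monotone-convergence argument. First I would use the hypothesis that $f$ has no positive fixed point together with the contrapositive of Claim~\ref{claim:concave-1} to deduce that no positive $a \in X$ satisfies $f(a) > a$; since $f$ maps $X$ into $X$, this gives $0 \le f(x) < x$ for every positive $x \in X$. Equivalently, in the notation of Claim~\ref{claim:A-L}, the largest nonnegative fixed point is $x^* = 0$, so $A = \{0\}$ and $L = (0, \bar a]$.

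Next I would show the iterates form a non-increasing sequence bounded below. Starting from $x_0 > 0$, the inequality $f(x) \le x$ gives $x_t = f(x_{t-1}) \le x_{t-1}$ for every $t$, while $f : X \to X$ guarantees $x_t \ge 0$. Hence $(x_t)_{t \ge 0}$ is monotone non-increasing and bounded below by $0$, so by the monotone convergence theorem it converges to some limit $z \ge 0$.

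The final step identifies the limit. Passing to the limit in the recursion $x_t = f(x_{t-1})$ and using the continuity of $f$ yields $z = f(z)$, so $z$ is a nonnegative fixed point of $f$. Since we are in the case where $f$ has no positive fixed point, $z$ cannot be positive, and therefore $z = 0$; this is exactly the claimed convergence $x_t \to 0$.

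I do not expect a genuine obstacle here: the argument is elementary once monotonicity and boundedness are in place. The only point requiring care is the logical order — one must invoke continuity to know the limit is a fixed point \emph{before} invoking the no-positive-fixed-point hypothesis to pin it down at $0$ — and the trivial edge case in which some iterate equals $0$ exactly, after which the sequence is constantly $0$ and the conclusion is immediate.
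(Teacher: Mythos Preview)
Your proposal is correct and follows essentially the same route as the paper: use the absence of a positive fixed point (via Claim~\ref{claim:concave-1} or equivalently Claim~\ref{claim:A-L}) to force $0 \le f(x) < x$ on $(0,\bar a]$, deduce that the iterates are monotone non-increasing and bounded below, and then invoke continuity so that the limit is a fixed point, hence $0$. The paper packages the last step as a separate sub-claim (Claim~\ref{claim:concave-limit-point}), but the argument is the same.
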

\begin{proof}
From Claim~\ref{claim:concave-2-existance}, $0$ is the unique nonnegative fixed point of $f$ and $L = X \setminus \{0\}$. If $x_0 = 0$, then $f(x_0) = 0$. If $x_0 \in L$, from \eqref{eq:L} and \ref{concave-1}, $0 \leq x_1 = f(x_0) < x_0$. By induction, the sequence
$$x_0 > x_1 > x_2 > \ldots \geq 0.$$
is decreasing and lower bounded by $0$.
\begin{claim}\label{claim:concave-limit-point}
If the sequence $x_0, x_1, \ldots$ is monotone and bounded, the sequence $x_0, x_1, \ldots$ converges to a limit point $z$. Moreover, $z$ is a fixed point of $f$.
\end{claim}
\begin{proof}
Since the Euclidean space is a complete metric space, the sequence $x_0, x_1, \ldots$ converges to a limit point $z$. By continuity of $f$, $z$ is a fixed point of $f$.
\end{proof}
Thus when $f$ has no positive fixed point, the nonincreasing sequence $x_0, x_1, \ldots$ converges to $0$, the unique nonnegative fixed point of $f$.
\end{proof}

For the case where $f$ has a positive fixed point, from Lemma~\ref{lemma:concave-uniqueness}, $f$ has a unique positive fixed point $x^*$. Next, we divide the proof into the case where $x^* < \bar x = \sup_{x \in X} f(x)$ and the case where $x^* \geq \bar x$. 

\begin{claim}\label{claim:concave-2-case-2}
If $f$ has a positive fixed point and $x^* < \bar x$, the sequence $x_t, x_{t+1}, \ldots$ converges to $x^*$ .
\end{claim}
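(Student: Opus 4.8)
The plan is to reduce everything to the monotone-convergence principle already isolated in Claim~\ref{claim:concave-limit-point}. For any positive starting point $x_0$, I will show that the orbit $x_0, x_1, \dots$ is monotone and lies entirely on one side of $x^*$, hence is bounded; Claim~\ref{claim:concave-limit-point} then produces a limit $z$ that is a fixed point of $f$, and because the orbit stays bounded away from $0$ the limit is positive, so Lemma~\ref{lemma:concave-uniqueness} (equivalently Claim~\ref{claim:concave-2-existance}) forces $z = x^*$. The structural fact that makes the orbit monotone is that $f$ is increasing: combined with the partition $A = [0, x^*]$, $L = (x^*, \bar a]$ from Claim~\ref{claim:A-L}, monotonicity renders each of $A$ and $L$ forward invariant under $f$.

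First I would record the consequence of the hypothesis $x^* < \bar x$. Since $f$ is increasing we have $\bar x = \sup_{x\in X} f(x) = f(\bar a)$, and $x^* < f(\bar a)$ together with $f(x^*) = x^*$ forces $x^* < \bar a$. Hence $x^*$ is interior and $L = (x^*, \bar a]$ is nonempty, so both cases below are genuine; the complementary regime $x^* \geq \bar x$, where $f(X) \subseteq [0, x^*]$ and only the increasing case survives, is treated separately.

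For the case $0 < x_0 \leq x^*$: if $x_0 = x^*$ the orbit is constant, and otherwise $x_0 \in (0, x^*) \subseteq A$ gives $f(x_0) \geq x_0$ by Claim~\ref{claim:A-L}, the inequality being strict because $x_0$ is not a fixed point (the unique positive fixed point is $x^*$), so $x_1 = f(x_0) > x_0$; monotonicity and $f(x^*) = x^*$ give $x_1 = f(x_0) \leq x^*$. Inductively $x_0 < x_1 < \cdots \leq x^*$, so the orbit is increasing and bounded above by $x^*$. For the case $x^* < x_0 \leq \bar a$: now $x_0 \in L$ gives $x_1 = f(x_0) < x_0$ by Claim~\ref{claim:A-L}, while monotonicity gives $x_1 = f(x_0) \geq f(x^*) = x^*$; inductively $x_0 > x_1 > \cdots \geq x^*$ (and if some $x_t$ reaches $x^*$ the orbit is constant thereafter), so the orbit is decreasing and bounded below by $x^*$.

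In either case the orbit is monotone and bounded, so Claim~\ref{claim:concave-limit-point} yields convergence to a fixed point $z$; the bounds $z \geq x_0 > 0$ in the increasing case and $z \geq x^* > 0$ in the decreasing case make $z$ positive, whence $z = x^*$. I expect the only real subtlety to be the forward invariance of the two regions, which is exactly where monotonicity of $f$ is indispensable: without it an iterate could overshoot $x^*$ and the orbit need not be monotone — indeed for a non-monotone strictly concave self-map such as the logistic map the conclusion can fail outright — so the argument must invoke monotonicity precisely at the two steps $f(x_0) \leq x^*$ and $f(x_0) \geq x^*$ above.
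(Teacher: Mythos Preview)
Your argument contains a genuine gap stemming from a misreading of $\bar x$. In this paper $\bar x$ is the \emph{location} of the maximum of $f$ (so that $I=[\,0,\bar x)$ is where $f$ is nondecreasing and $D=[\bar x,\bar a]$ is where $f$ is nonincreasing; see the sentence recalling Assumption~\ref{assumption:concave-2}), not the maximum value $\sup_{x\in X}f(x)=f(\bar x)$. The hypothesis $x^*<\bar x$ therefore says only that the fixed point lies in the increasing region $I$; it does \emph{not} make $f$ globally increasing on $X$. Your opening paragraph (``Since $f$ is increasing we have $\bar x=\sup_{x\in X}f(x)=f(\bar a)$'') and the closing remark about the logistic map both rest on that false premise.

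The place this bites is your decreasing case $x_0\in L=(x^*,\bar a]$. You assert $x_1=f(x_0)\ge f(x^*)=x^*$ ``by monotonicity'', i.e.\ that $L$ is forward invariant. But $L$ may reach into $D$, where $f$ is decreasing, and then $f(x_0)$ can fall below $x^*$; for instance $f(\bar a)$ may well be $0$. So the chain $x_0>x_1>\cdots\ge x^*$ is not justified, and the orbit need not be monotone. Your increasing case $x_0\in A$ is fine, precisely because $A=[0,x^*]\subseteq I$ under the hypothesis $x^*<\bar x$, so monotonicity of $f$ on $I$ legitimately gives $f(A)\subseteq A$.

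The paper's proof repairs exactly this point: in Case~2 it does \emph{not} claim $L$ is invariant. Instead it splits into (i) the orbit stays in $L$ forever, in which case $x_{t+i}>x^*$ by assumption and $x_{t+i+1}=f(x_{t+i})<x_{t+i}$ by the definition of $L$, so the sequence is decreasing, bounded below by $x^*$, and converges to the unique positive fixed point; or (ii) some iterate lands in $A$, after which your (correct) Case~1 argument takes over. Adding this one extra subcase fixes your proof.
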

\begin{proof}
Observe $[\inf X, x^*] = A \subseteq I = [\inf X, \bar x)$. We consider separately the case where $x_t \in A$ and the case $x_t \in L$.

\vspace{1mm}\noindent\textbf{Case 1:} For the case $x_t \in A$,
\begin{align*}
x_t &\leq x_{t+1} = f(x_t) & \text{Because $x_t \in A$},\\
&\leq f(x^*) & \text{Because $f$ is increasing on $I \ni x_t, x^*$ and $x_t \leq x^*$},\\
&= x^* & \text{Because $x^* = f(x^*)$ is a fixed point.}
\end{align*}
Thus $x_{t+1} \in A$. By induction, $x_{t + i} \in A$ for all $i \geq 0$. Thus the sequence
$$x_t \leq x_{t+1} \leq \ldots \leq x^*$$
is nondecreasing and upper bounded by $x^*$. From Claim~\ref{claim:concave-limit-point}, the sequence $x_t, x_{t+1}, \ldots$ converges to $x^*$.

\vspace{1mm}\noindent\textbf{Case 2:} For the case where $x_t \in L$, we divide the proof into two subcases. First, for the case where $x_{t+i} \in L = (x^*, \sup X]$ for all $i \geq 0$, the sequence 
$$x_t > x_{t+1} > x_{t+2} > \ldots > x^*$$
is decreasing and lower bounded by $x^*$. From Claim~\ref{claim:concave-limit-point}, the sequence $x_t, x_{t+1}, \ldots$ converges to $x^*$.

Second, for the case where $x_{t+i} \not \in L$ for some $i \geq 0$, $x_{t+i} \in A = X \setminus L$. Thus invoke Case 1 to conclude the sequence $x_{t+i}, x_{t+i+1}, x_{t+i+2}, \ldots$ converges to $x^*$. This proves the sequence $x_t, x_{t+1}, \ldots$ converges to $x^*$ whenever $f$ has a positive fixed point and $x^* < \bar x$.
\end{proof}
\begin{claim}\label{claim:concave-2-case-3}
If $f$ has a positive fixed point and $\bar x \leq x^*$, the sequence $x_t, x_{t+1}, \ldots$ converges to $x^*$.
\end{claim}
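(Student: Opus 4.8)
The plan is to reduce this case to the monotone situation already handled in Case~1 of Claim~\ref{claim:concave-2-case-2}, by showing that from the first iterate onward the sequence can never leave $A$. The first step is to pin down the hypothesis: since $x^* = f(x^*) \leq \sup_{x \in X} f(x) = \bar x$, the assumption $\bar x \leq x^*$ forces $\bar x = x^*$. In other words, the maximal value of $f$ on $X$ equals $x^*$, so $f(x) \leq x^*$ for every $x \in X$. Consequently, for each $t \geq 1$ we have $x_t = f(x_{t-1}) \leq \bar x = x^*$, and since $f$ maps into $X$ we also have $x_t \geq 0$; hence $x_t \in [\inf X, x^*] = A$ by Claim~\ref{claim:A-L}. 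Thus, in contrast to the general situation, the iterates never enter $L = (x^*, \sup X]$ after the first step, and this is exactly what makes the case a degenerate boundary version of the earlier analysis.

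With every $x_t$ (for $t \geq 1$) lying in $A$, the argument of Case~1 in Claim~\ref{claim:concave-2-case-2} applies essentially verbatim. From $x_t \in A$ we get $x_t \leq f(x_t) = x_{t+1}$ by the definition~\eqref{eq:A} of $A$, and combining this with $x_{t+1} \leq x^*$ gives $x_t \leq x_{t+1} \leq x^*$, so that $x_{t+1} \in A$ as well. By induction the tail $x_1 \leq x_2 \leq \cdots \leq x^*$ is nondecreasing and bounded above by $x^*$, and Claim~\ref{claim:concave-limit-point} then yields that $x_1, x_2, \ldots$ converges to a fixed point $z \in [x_1, x^*]$.

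The only delicate point, which I expect to be the main obstacle, is identifying the limit $z$ with $x^*$ rather than with the other possible fixed point $0$: since the sequence is nondecreasing, a priori it could be trapped at $0$. To rule this out it suffices to show $x_1 > 0$, for then $z \geq x_1 > 0$, making $z$ a positive fixed point, whence $z = x^*$ by Lemma~\ref{lemma:concave-uniqueness}. I would obtain $x_1 > 0$ from the monotonicity of $f$ together with $f(0) \geq 0$ from~\ref{concave-1}: because $f$ is increasing and strictly concave, a positive input $x_0$ produces $x_1 = f(x_0) > f(0) \geq 0$. This closes the argument and shows $x_t, x_{t+1}, \ldots$ converges to $x^*$ whenever $f$ has a positive fixed point and $\bar x \leq x^*$.
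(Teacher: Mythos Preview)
Your argument hinges on reading $\bar x$ as $\sup_{x \in X} f(x)$, but throughout Assumption~\ref{assumption:concave-2} and the proof of Claim~\ref{claim:concave-2-case-3} the symbol $\bar x$ denotes the \emph{argmax} of $f$ --- the boundary between the increasing interval $I=[\inf X,\bar x)$ and the decreasing interval $D=[\bar x,\sup X]$ --- so that $f(\bar x)=\sup_{x\in X}f(x)$. (The sentence preceding Claim~\ref{claim:concave-2-case-2} is unfortunately worded, but the usage inside the proof itself, e.g.\ the interval $[\bar x,f(\bar x)]$ and the identity $\sup_{x\in X}f(x)=f(\bar x)$, makes the intended meaning unambiguous.) Under the correct reading, $\bar x\le x^*$ does \emph{not} force $\bar x=x^*$: since $f$ is nonincreasing on $D\ni\bar x,x^*$ we get $f(\bar x)\ge f(x^*)=x^*$, and strict concavity rules out equality whenever $\bar x<x^*$. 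Hence $\sup_x f(x)=f(\bar x)>x^*$ in general, your claim that $f(x)\le x^*$ for every $x\in X$ fails, and iterates can land in $L=(x^*,\sup X]$ --- for instance, if $x_{t-1}$ lies just below $\bar x$ then $x_t=f(x_{t-1})$ is close to $f(\bar x)>x^*$. The monotone trapping you borrow from Case~1 of Claim~\ref{claim:concave-2-case-2} therefore does not go through. (A secondary gap: your argument that $x_1>0$ invokes ``$f$ is increasing'', which holds only on $I$; when $x_0\in D$ that step is unjustified.)

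The paper's proof takes a different route precisely because $A$ is not forward-invariant in this regime. It uses hypothesis~\ref{concave-3} --- that the restriction of $f$ to $D$ is $L$-Lipschitz with $L<1$ --- to show that $f$ maps $[\bar x,f(\bar x)]\subseteq D$ into itself and is a contraction there (Claim~\ref{claim:restriction-contraction}); Banach's principle then yields convergence to $x^*$ once an iterate enters that interval (Claim~\ref{claim:concave-2}). The remainder is a case split on $x_t\in I$ versus $x_t\in D$, showing that the orbit either is monotone and bounded (hence convergent to a fixed point by Claim~\ref{claim:concave-limit-point}) or eventually reaches $[\bar x,f(\bar x)]$.
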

\begin{proof}
We first show the restriction of $f$ to $[\bar x, f(\bar x)]$ is a contractive mapping --- i.e., $f$ maps $[\bar x, f(\bar x)]$ to itself and the restriction of $f$ to $[\bar x, f(\bar x)]$ is $L$-Lipschitz with $L < 1$.
\begin{claim}\label{claim:restriction-contraction}
The restriction of $f$ to $[\bar x, f(\bar x)]$ is a contractive mapping.
\end{claim}
\begin{proof}
From assumption~\ref{concave-3}, the restriction of $f$ to the interval $[\bar x, f(\bar x)] \subseteq D$ is $L$-Lipschitz with $L < 1$. Thus it suffices to show that $f$ maps $[\bar x, f(\bar x)]$ to itself. Let $x \in [\bar x, f(\bar x)]$, then $f(x) \leq f(\bar x)$ since $\bar x \leq x$ and the restriction of $f$ to $D$ is decreasing. Suppose for contradiction $f(x) < \bar x$, then
\begin{align*}
|f(x) - f(\bar x)| &= f(\bar x) - f(x)\\
&> f(\bar x) - \bar x \quad &\text{From the assumption $f(x) < \bar x$},\\
&\geq x - \bar x \Rightarrow\Leftarrow \quad &\text{Observe $x \leq f(\bar x)$ since $x \in [\bar x, f(\bar x)]$}.
\end{align*}
We reach a contradiction to the fact that the restriction of $f$ to $D$ is $L$-Lipschitz with $L < 1$. Thus $\bar x \leq f(x) \leq f(\bar x)$, proving $f$ maps $[\bar x, f(\bar x)]$ to itself. This proves the restriction of $f$ to $[\bar x, f(\bar x)]$ is a contractive mapping.
\end{proof}
It follows that if $x_t \in [\bar x, f(\bar x)]$, the sequence $x_t, x_{t+1}, \ldots$ will converge to a limit point.
\begin{claim}\label{claim:concave-2}
For any $x_t \in [\bar x, f(\bar x)]$, the sequence $x_t, x_{t+1}, x_{t+1}, \ldots$ converges to $x^*$.
\end{claim}
\begin{proof}
The proof follows from the Banach Contraction Principle (Lemma~\ref{lemma:banach-contraction}) and the fact the restriction of $f$ to $[\bar x, f(\bar x)]$ is a contractive map (Claim~\ref{claim:restriction-contraction}).
\end{proof}
Recall from Assumption~\ref{assumption:concave-2}, $I = [\inf X, \bar x)$ is the interval where $f$ is nondecreasing and $D = [\bar x, \sup X]$ is the interval $f$ is nonincreasing. Next, we consider separately the case where $x_t \in I$ and the case $x_t \in D$.

\vspace{1mm}\noindent\textbf{Case 1:} For the case $x_t \in I$, observe
$$I = [\inf X, \bar x) \subset [\inf X, x^*] = A,$$
from the assumption $\bar x \leq x^*$. Thus $x_t \in A$. Let's consider two distinct cases depending on whether $x_{t+i} \in A$ for all $i \geq 0$ or $x_{t+i} \not\in A$ for some $i \geq 0$. For the case where $x_{t + i} \in A$ for all $i \geq 0$, the sequence
$$x_t \leq  x_{t+1} \leq x_{t+2} \leq  \ldots \leq \sup_{x \in X} f(x) = f(\bar x)$$
is nondecreasing and upper bounded by $f(\bar x) < \infty$ (because $f$ is Lipschitz continuous):
\begin{claim} $\sup_{x \in X} f(x) < \infty$.
\end{claim}
\begin{proof}
From \ref{concave-4}, there is a $b \in X$ such that $f(b) < b$. Because $f$ is $L$-Lipschitz,
$$\sup_{x \in X}f(x) - f(b) = |f(\bar x) - f(b)| \leq L|\bar x - b| \leq L\sup X.$$
Thus $\sup_{x \in X} f(x) \leq L\sup X + f(b) < L \sup X + b < (L+1)\sup X < \infty$.
\end{proof}
From Claim~\ref{claim:concave-limit-point}, the nondecreasing sequence $x_t, x_{t+1}, \ldots$ converges to $x^*$. For the case where $x_{t+i} \not \in A$ for some $i$, let $i \geq 1$ be the first element that satisfies this condition --- that is, $x_{t+i} \not \in A$ while $x_{t+i-j} \in A$ for all $1 \leq j \leq i$. Observe
$$x_{t+i} = f(x_{t+i-1}) \leq \sup_{x \in X} f(x) = f(\bar x),$$
and the fact $x_{t+i} \not \in A$ implies $x_{t+i} \in X \setminus A = L = (x^*, \sup X]$. Thus $x_{t+i} > x^* \geq \bar x$. Thus
$$\bar x \leq x^* < x_{t+i} = f(x_{t+i-1}) \leq f(\bar x).$$
The chain of inequalities witnesses $x_{t+i} \in [\bar x, f(\bar x)]$ and from Claim~\ref{claim:concave-2}, the sequence $x_{t+i}, x_{t+i+1}, x_{t+i+2}, \ldots$ converges to $x^*$.

\vspace{1mm}\noindent\textbf{Case 2:} For the case where $x_t \in D = [\bar x, \sup X]$, observe $x_t \geq \bar x > 0$. We consider separately the case where $x_{t+i} \in D$ for all $i \geq 0$ and the case where $x_{t+i} \not \in D$ for some $i \geq 0$.

For the case where $x_{t+i} \in D$ for all $i \geq 0$, from Banach contraction principle, the sequence $x_t, x_{t+1}, \ldots$ converges to $x^* \geq \bar x$ since the restriction of $f$ to $D$ is $L$-Lipschitz with $L < 1$~\ref{concave-4}.

For the case where $x_{t+i} \not \in D$ for some $i \geq 0$, $x_{t+i} \in X \setminus D = I$. From Case 1, the sequence $x_{t+i}, x_{t+i+1}, \ldots$ converges to $x^*$. Cases 1 and 2 prove the sequence $x_t, x_{t+1}, \ldots$ converges to $x^*$ whenever $\bar x \leq x^*$ and $f$ has a positive fixed point.
\end{proof}
Claims~\ref{claim:concave-2-case-1}, \ref{claim:concave-2-case-2}, and \ref{claim:concave-2-case-3} together prove that the sequence $x_0, x_1, \ldots$ converges to a fixed point of $f$.
\end{proof}

\begin{proof}[Proof of Lemma~\ref{lemma:reduction}]
Recall that $f : \real_{\geq 0} \to \real_{\geq 0}$ is $L$-Lipschitz, that the restriction of $f$ to $X = [0, \bar x]$ is strictly concave, and that $f(x) = 0$ for all $x \geq \bar a$. For positive $\alpha \leq \frac{1}{L+1}$, define
$$g(x) = \alpha f(x) + (1-\alpha)x.$$
First, we show the restriction of $g : \real_{\geq 0} \to \real_{\geq 0}$ to $X$ maps $X$ to itself and satisfies Assumption~\ref{assumption:concave-2}.

For \ref{concave-0}, observe that $g$ is a convex combination of $x$ and $f(x)$. From Proposition~\ref{prop:concave}, $g$ restricted to $X$ is strictly concave (because $f$ restricted to $X$ is strictly concave).

For \ref{concave-1}, from the assumption $f$ is nonnegative, $g$ is also nonnegative.

For \ref{concave-3}, for all $x \in X$,
\begin{align*}
    0 \leq g(x) &= \alpha f(x) + (1-\alpha)x\\
    &= \alpha (f(x) - f(\bar a)) + (1-\alpha)x \quad &\text{From \ref{concave-2-2}, $f(\bar a) = 0$},\\
    &\leq \alpha (\bar a - x)L + (1-\alpha)x \quad  &\text{From the fact $f$ is $L$-Lipschitz},\\
    &\leq \bar a - x + x - \alpha x\quad &\text{From the fact $\alpha \leq 1/L$},\\
    &\leq \bar a.
\end{align*}
Thus $g$ maps $X$ to itself. Let $I \subseteq X$ be the interval where $f$ is increasing and let $D \subseteq X$ be the interval where $f$ is decreasing. Then the restriction of $g$ to $D$ is a monotone mixture (Definition~\ref{def:monotone-mixture}) with kernel $f$. From Lemma~\ref{lemma:contractive-update-rule}, the restriction of $g$ to $D$ is $(1-\alpha)$-Lipschitz for $\alpha \leq \frac{1}{L+1}$. This proves the restriction of $g$ to $X$ satisfies \ref{concave-3}.

For \ref{concave-4}, let $b = \bar a > 0$ and from the assumption $f(b) = 0 < b$,
$$g(b) = \alpha f(b) + (1-\alpha)b < b.$$
The work above proves the restriction of $g$ to $X$ satisfies Assumption~\ref{assumption:concave-2}. Consider the fixed-point iteration of $g$ starting from a positive $x_0$. Let $x_{t+1} = g(x_t)$ for $t \geq 0$. If $x_t \geq \bar a$, then
$$g(x_t) = (1-\alpha) x_t.$$
Let $x_t$ be the first element in the sequence $x_0, x_1, \ldots$ such that $x_t \leq \bar a$. From Lemma~\ref{lemma:concave-contraction}, the fixed-point iteration of $g$ restricted to $X$ starting from $x_t \in X$ converges to a nonnegative fixed point of $g$. Finally, observe that a fixed point of $g$ is also a fixed point of $f$.
\end{proof}

\begin{example}\label{example:reduction}
Consider the strictly concave functions $f_1(x) = 4 - (x-2)^2$ and $f_2(x) = \frac{f(x)}{2}$, and for $\alpha = 0.4$, define functions $g_1(x) = \alpha f_1(x) + (1-\alpha) x$ and $g_2(x) = \alpha f_2(x) + (1-\alpha)x$. In Figure~\ref{fig:concave-reduction}, we plot the fixed-point iteration of $g_1$ (respectively $g_2$), demonstrating that for a function satisfying Assumption~\ref{assumption:concave-1} (e.g., $f_1$ (resp. $f_2$)), we can construct a function satisfying Assumption~\ref{assumption:concave-2} (e.g., $g_1$ (resp. $g_2$)). Then the fixed-point iteration of $g_1$ (resp. $g_2$) converges to a fixed point for $f_1$ (resp. $f_2$).
\begin{figure}
    \centering
    \includegraphics[scale=0.90]{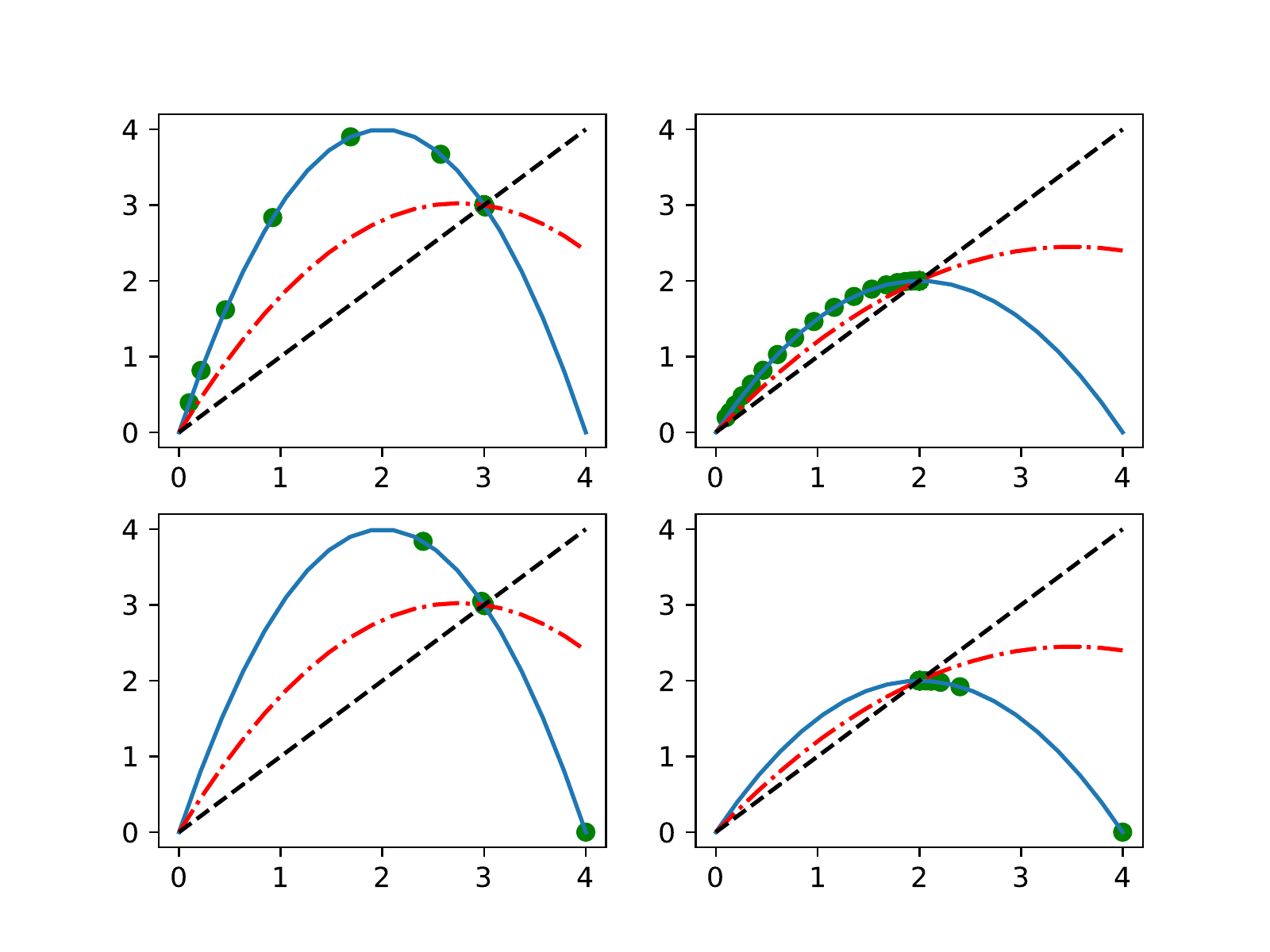}
    \caption{(Top-left) $g_1$ with initial point $q_0 = 0.1$; (Top-right) $g_2$ with initial point $q_0 = 0.1$; (Bottom-left) $g_1$ with initial point $q_0 = 4$; (Bottom-right) $g_2$ with initial point $q_0 = 4$. The solid line (blue) denotes function $f_1$ or $f_2$. The dashed-dot line (red) denotes function $g_1$ or $g_2$. The dashed line (black) is the function $y(x) = x$. Dots (green) denote the following sequence of points in $\mathbb R^2$: $(q_0, f(q_0)), (f(q_0), f^2(q_0)), (f^2(q_0), f^3(q_0)), \ldots$.}
    \label{fig:concave-reduction}
\end{figure}
\end{example}

\subsection{Omitted proofs from Section~\ref{sec:twdpp-welfare}}\label{sec:twdpp-welfare-appendix}

\begin{proposition}\label{prop:rm-welfare}
Consider the RDPP mechanism $(\vec x, \vec p, T)$ with equilibrium price $q$. If $\welfare(\vec x(q)) \geq \frac{q \cdot m}{1+\delta}$ and $Pr[N(q) < m] \geq \frac{\delta}{1+\delta}$, then 
$$\welfare(\vec x(q)) \geq \frac{\mathit{OPT}}{2(1+\delta)}\min \lbrace 1, \delta\rbrace.$$
\end{proposition}
\begin{proof}
Let $c$ be a positive constant. First, we consider two cases:

\vspace{1mm}\noindent\textbf{Case 1.} For the case $q \geq \frac{c}{m} OPT$, from the assumption $\welfare(\vec x(q)) \geq \frac{q \cdot m}{1+\delta}$, we have
\begin{equation}\label{eq:w-case-1}
\welfare(\vec x(q)) \geq \frac{q \cdot m}{1+\delta} \geq \frac{c}{1+\delta} OPT.
\end{equation}

\vspace{1mm}\noindent\textbf{Case 2.} Next, consider the case $q < \frac{c}{m} OPT$. From the assumption $\pr{N(q) < m} \geq \frac{\delta}{1+\delta}$, we might hope to not lose too much welfare in the event $\{N(q) \geq m\}$ since this event is correlated with bidders having smaller values. Moreover, it is easy to estimate the welfare whenever $N(q) \leq m$ --- it is simply the sum of all $v_i \geq q$. Since $q < \frac{c}{m} OPT$, we can upper bound the welfare loss from the top $N(q) - m$ bidders with values $v_i < q$. Let $N_{-i}(q) := \sum_{j \neq i} \mathbbm 1_{v_j \geq q}$ and let $\vec v_{-i} = (v_1, \ldots, v_{i-1}, v_{i+1}, \ldots, v_n)$ be the vector $\vec v$ excluding the $i$-th entry. Then,
\begin{align*}
\welfare(\vec x(q)) &= \e{\sum_{i = 1}^n v_i \cdot x_i(q) \cdot \mathbbm 1_{N(q) \leq m}} + \e{\sum_{i = 1}^n v_i \cdot x_i(q) \cdot \ind{N(q) > m}}\\
&\geq \e{\sum_{i = 1}^n v_i \cdot x_i(q) \cdot \ind{N(q) \leq m}}\\
&= \e{\sum_{i = 1}^n v_i \cdot x_i^*(\vec v) \cdot \ind{v_i \geq q} \cdot \ind{N(q) \leq m}} \\
&= \e{\sum_{i = 1}^n v_i \cdot x_i^*(\vec v) \cdot \ind{v_i \geq q} \cdot \ind{N_{-i}(q) \leq m-1}} \\
&= \sum_{i = 1}^n \mathbb E_{\vec v_{-i} \sim F^{n-1}}\left[\ind{N_{-i}(q) \leq m-1} \cdot \mathbb E_{v_i \sim F}\left[v_i \cdot x_i^*(\vec v) \cdot \ind{v_i \geq q} \cdot \ind{N_{-i}(q) \leq m-1} | \vec v_{-i}\right] \right].
\end{align*}
The third line observes that given $N(q) \leq m$, we have that $x_i(q) = 1$ if and only if bidder $i$ is one of the top $m$ bidders with value $v_i \geq q$. The fourth line observes that the following events are equivalent:
\begin{itemize}
\item Bidder $i$ with value $v_i \geq q$ is one of the top $m$ bidders when there are at most $m$ bidders with value at least $q$.
\item Bidder $i$ with value $v_i \geq q$ is one of the top $m$ bidders when among bidders $[n]\setminus \{i\}$ there are at most $m-1$ bidders with value at least $q$.
\end{itemize}
The fifth line follows from linearity of expectation and the law of total expectation by observing that $\vec v$ is drawn from the product distribution $F^n = \underbrace{F \times \ldots \times F}_{n\text{ times}}$. Next, observe that if among bidders $[n]\setminus \{i\}$ there are at most $m-1$ bidders with value at least $q$, then the event that bidder $i$ has value at least $q$ implies bidder $i$ is one of the top $m$ bidders. Thus $\welfare(\vec x(q))$ is at least
\begin{align*}
&\geq \sum_{i = 1}^n \mathbb E_{\vec v_{-i} \sim F^{n-1}}\left[\ind{N_{-i}(q) \leq m-1} \cdot \mathbb E_{v_i \sim F}\left[v_i \mathbbm \cdot \ind{v_i \geq q} | \vec v_{-i}\right] \right]\\
&= \sum_{i = 1}^n \pr{N_{-i}(q) \leq m-1} \e{v_i \cdot \ind{v_i \geq q}}\quad \text{From independence between $v_i$ and $\vec v_{-i}$,}\\
&\geq \pr{N(q) < m}  \e{\sum_{i = 1}^n v_i \cdot \ind{v_i \geq q}} \quad \text{$N(q) < m$ implies $N_{-i}(q) \leq m-1$,}\\
&\geq \frac{\delta}{1+\delta} \e{\sum_{i = 1}^nv_i \mathbbm \cdot \ind{v_i \geq q} } \quad \text{From the assumption $N(q) < m$ with probability at least $\delta/(1+\delta)$,}\\
&=\frac{\delta}{1+\delta} \left(\e{\sum_{i = 1}^n v_i \cdot x_i^*(\vec v)} + \e{\sum_{i = 1}^n v_i(\ind{v_i \geq q} - x_i^*(\vec v))}\right)\\
&\geq \frac{\delta}{1+\delta} \left(\e{\sum_{i = 1}^n v_i \cdot x_i^*(\vec v)} - \e{\sum_{i = 1}^n v_i \cdot x_i^*(\vec v) \cdot \ind{v_i < q}}\right)\\
&> \frac{\delta}{1+\delta}\left(OPT - q \cdot m\right) \quad \text{From the fact $v_i \cdot \ind{v_i < q} < q$ and $\sum_{i = 1}^n x_i^*(\vec v) \leq m$,}\\
&\geq \frac{\delta}{1+\delta}(1-c)OPT \quad \text{From the assumption $q \cdot m < c \cdot OPT$.}
\end{align*}
The second line observes that for independent random variables $X$ and $Y$, $\e{X\cdot Y} = \e{X}\e{Y}$. The chain of inequalities proves that for Case 2,
$$\welfare(\vec x(q)) \geq \frac{\delta}{1+\delta}(1-c)OPT.$$
Combining Case 1 and 2, we get that
$$\welfare(\vec x(q)) \geq \frac{OPT}{1+\delta}\min \lbrace c, (1-c)\delta\rbrace.$$
Setting $c = 1/2$ proves Proposition~\ref{prop:rm-welfare}.
\end{proof}

\begin{proof}[Proof of Theorem~\ref{thm:twdpp-welfare}]
Let $N_{-i}(q) = \sum_{j \neq i} \ind{v_j \geq q}$ be the number of bidders that would purchase a slot at price $q$ excluding bidder $i$. Recall that the value profile $\vec v$ is drawn from the product distribution $F^n$ and whenever the demand $N(q) > m$, the active miner selects $m$ random bidders with values at least $q$. Recall that the expected value of the random operator $\ttw(q, B)$ is
$$E_{\ttw}(q) = \alpha \ktw(q) + (1-\alpha)q,$$
and if $q$ is an equilibrium price, $q = E_{\ttw}(q)$ is a fixed point of $E_{\ttw}$. From the expression for $E_{\ttw}$, we get that $q = \ktw(q)$ is also a fixed point of $\ktw$. By the definition of $\ktw$ (Equation~\ref{eq:ktw}), if $q$ is a fixed point of $\ktw$, we have
\begin{equation}\label{eq:prob-full}
q = \ktw(q) \geq (1+\delta)q Pr[N(q) \geq m].
\end{equation}
Thus, the probability that at price $q$, the demand is at most $m-1$ is
\begin{equation}\label{eq:prob-clearing}
\pr{N(q) < m} =  1 - \pr{N(q) \geq m} \geq \frac{\delta}{1+\delta}.
\end{equation}
Next, we lower bound  $\welfare(\vec x(q))$ in terms of $q$ and $\delta$. Because $q$ is a fixed point of $\ktw$, we have
\begin{align*}
q \cdot m &= \ktw(q) \cdot m = \e{\sum_{i = 1}^n \min\{v_i, (1+\delta)q\} \cdot x_i(q) \cdot \ind{N(q) < m} + (1+\delta) \cdot q \cdot \ind{N(q) \geq m}}\\
& \leq \e{\sum_{i = 1}^n v_i \cdot x_i(q) \cdot \ind{N(q) < m} + (1+\delta) \cdot q \cdot \ind{N(q) \geq m}}\\
& = \welfare(\vec x(q)) + \e{\sum_{i = 1}^n ((1+\delta) \cdot q - v_i)\cdot x_i(q) \cdot \ind{N(q) \geq m}}\\
&\leq \welfare(\vec x(q)) + \delta \cdot q \cdot \e{\sum_{i = 1}^n x_i(q) \cdot \ind{N(q) \geq m} } \quad \text{Observing $x_i(q) = 1$ only if $v_i \geq q$.}\\
&= \welfare(\vec x(q)) + \delta \cdot q \cdot m \cdot \pr{N(q) \geq m}\\
&\leq \welfare(\vec x(q)) + \frac{\delta \cdot q \cdot m}{1+\delta} \quad \text{From \eqref{eq:prob-full}.}
\end{align*}

Rearranging the inequality, we get
\begin{equation}\label{eq:w-welfare-bound}
\welfare(\vec x(q)) \geq \frac{q \cdot m}{1+\delta}.
\end{equation}
From \eqref{eq:prob-clearing}, \eqref{eq:w-welfare-bound},  and Proposition~\ref{prop:rm-welfare}, we establish
$$\welfare(\vec x(q)) \geq \frac{OPT}{2(1+\delta)}\min\{1, \delta\}.$$
\end{proof}

\end{document}